\newif\ifanonymous
\newif\ifnotkurz
\let\@oddfoot\@empty 
\newcommand\eps{\varepsilon}
\newcommand\eeta{f} 
\newcommand\stepref[1]{\hyperref[sec:step#1]{Step~#1}}
\newcommand\refToStepA[1]{\hyperref[para:step#1]{Step~#1}}
\newcommand\refToStep[1]{\hyperref[para:step#1]{#1}}
 \newcommand\clausEATCS{claus80-Bull-EATCS-anonymous}
 \newcommand\clausEATCS{claus80-Bull-EATCS}
        \title{Probabilistic Finite Automaton Emptiness is
          Undecidable
          for a Fixed Automaton}
\author{G\"unter Rote}{Freie Universit\"at Berlin, Institut für
  Informatik,
Takustr.~
 9, 14195 Berlin, Germany
}{rote@inf.fu-berlin.de}{https://orcid.org/0000-0002-0351-5945}{}
\authorrunning{G. Rote}
\titlerunning{G. Rote: PFA
  Emptiness is
          Undecidable
        for a Fixed Automaton}
\authorrunning{G. Rote: PFA
  Emptiness is
          Undecidable
        for a Fixed Automaton}
\keywords{Probabilistic finite automaton,
  Undecidability, Post Correspondence Problem} 
\begin{document}

  \def\sectionautorefname{Section} 
  \def\subsectionautorefname{Section} 

\maketitle

\let\paragraph=\subparagraph

\begin{abstract}
  We construct a 
  probabilistic finite automaton (PFA) with 7 states
  and an input alphabet of 5~symbols
for which the PFA Emptiness Problem is undecidable.
The only 
input for the decision problem is the starting distribution.
For the proof, we use reductions from special instances of the Post
Correspondence Problem.

We also consider some variations:
The input alphabet of the PFA can be restricted to a binary alphabet
at the expense of a larger
number of states.  If we allow a rational output value 
for each state 
instead of a yes-no acceptance decision,
the number
of states can even be reduced to 6. 

\end{abstract}


{ \smallskip
  \vfill 
  \noindent
  \rlap{\color{lipicsLineGray}\vrule width\textwidth height 1pt }%
  \hspace*{7mm}\fboxsep1.5mm\colorbox[rgb]{1,1,1}{\raisebox{-0.4ex}{%
    \large\selectfont\sffamily\bfseries Contents}}%
\def\section*#1{}

 \def\addvspace#1{\vspace {0.03pt}}
 
                          

  

 \baselineskip = 12.7 pt
 \enlargethispage{1.2\baselineskip} 
  \tableofcontents
  \par
}



\newpage
\section{Probabilistic finite automata (PFA)}
\label{sec:PFA}

A probabilistic finite automaton (PFA)
combines characteristics of
a finite automaton and a Markov chain.
We give a formal definition
below. 
Informally, we can think of a PFA in terms of an algorithm that
reads a sequence of input symbols 
from left to right,
having only finite memory. That is, it can manipulate a
finite number of variables with bounded range, just like an ordinary
finite automaton. In addition, a
PFA 
can make coin flips.
As a consequence,
the question whether the PFA arrives in an accepting state and thus accepts a given input word 
is not a yes/no decision, but it happens with a certain {probability}.
The language \emph{recognized}
(or \emph{represented})
by a PFA is defined by specifying a
probability threshold or
\emph{cutpoint} $\lambda$. By convention, the language consists of all words for which the
probability of acceptance strictly exceeds~$\lambda$.

The \emph{PFA Emptiness Problem} is the problem of deciding whether
this language is empty.
This problem is undecidable.
Many other undecidability results rely on the undecidability of
the PFA Emptiness Problem, for example, problems about
matrix products~\cite{blondel2000},
growth problems
~\cite[Chapter~6]{bui23-phd},
or probabilistic planning problems~\cite
{jair03}. 

We present some 
sharpenings of the undecidability statement,
where certain parameters of the PFA are restricted.

 \subsection{Formal problem definition}
\label{sec:formal}

Formally, a PFA is given by a sequence of stochastic \emph{transition
 matrices} $M_\sigma$, 
one for each letter $\sigma$ from the input alphabet $\Sigma$.
The states correspond to the rows and columns of the matrices.
Thus,
 the matrices are $d\times d$ matrices for a PFA with $d$~states.
 The start state is chosen according
 to a given
 probability distribution $\pi\in \mathbb R^d$.
 The set of accepting
 states is characterized by a 0-1-vector $\eeta \in \{0,1\}^{d}$.
%
%
In terms of these data, the PFA Emptiness Problem
with cutpoint $\lambda$ is
as follows:

\goodbreak

\begin{quote}
  \textsc{PFA Emptiness.}
  Given  a 
  set of $k$ 
  stochastic matrices
 $\mathcal{M}=\{M_1,\ldots,M_k\} \subset  \mathbb Q^{d\times d}$,
a probability distribution $\pi\in \mathbb Q^d$,
and a 0-1-vector $\eeta \in \{0,1\}^{d}$, 
is there a sequence
$i_1,i_2,\ldots, i_m$
with $1\le i_j\le k$ for $j=1,\ldots,m$
such that
\begin{equation}\label{eq:accept}
  \pi^T\!M_{i_1}M_{i_2}\ldots M_{i_m}\eeta > \lambda \ ?
\end{equation}
\goodbreak
\end{quote}
The most natural choice is $\lambda=\frac12$, but
the problem is undecidable for
any fixed (rational or irrational) cutpoint $\lambda$ with $0<
\lambda< 1$.  We can also ask $\ge\lambda$ instead of $>\lambda$.

\paragraph{Rational-Weigthed Automata.}
A \emph{weighted automaton} over the reals or
\emph{generalized probabilistic automaton} or
\emph{pseudo-stochastic automaton}
is similar, but
the start vector $\pi$, the end vector $f$, and
the matrices~$M_i$ can be filled with arbitrary real numbers.

\begin{table}[thb]
  \centering
  \newcommand\inp{\emph{input}}
  \newcommand\fix{\emph{fixed}}
  \begin{tabular}{|l|crlcl|}
    \hline
    Theorem&$\pi$&\!\!\!\!\!\!
                   $|\mathcal{M}|$&$M \in
                                    \mathcal{M}$&$\eeta$&acceptance
                                                          crit.
    \\
  \hline
Thm.~\ref{thm-matrices-forward}\ref{4-positive}
           & \inp & 4 & $7\times7$, positive, \inp 
                                               &$f=e_1$ & $>1/7$ or $\ge 1/7$\\
Thm.~\ref{thm-matrices-forward}\ref{5-fixed}
           & \inp & 6 & $7\times7$, positive, \fix 
                                               &$f=e_1$ & $>1/7$ or $\ge 1/7$\\
Thm.~\ref{thm-matrices-forward}\ref{2-matrices}
           & \inp & 2 & $18\times18$, positive, \inp 
                                               &$f=e_1$
                                         & $>1/18$ or $\ge 1/18$\\[0pt]
    Thm.~\ref{thm-matrices-forward}\ref{2-matrices-fixed}
           & \inp & 2 & $28\times28$, positive, \fix 
           &$f=e_1$ 
           & $>1/28$ or $\ge 1/28$\\
\hline    
    Thm.~\ref{thm-matrices-backward}\ref{backward}
           & $\pi= e_2$
 & 4 & $6\times 6$, positive, \inp
     &$\hskip -1.65em
     f\in [0,1]^6$, \inp
                           & $>\lambda$ or $\ge \lambda$
\\
    Thm.~\ref{thm-matrices-backward}\ref{backward-fixed}
      & 
        \fix
     &
       5 & $6\times 6$, positive, \fix                     
     &
       $\hskip -1.65em
       f\in [0,1]^6$, \inp
                           & $>\lambda$ or $\ge \lambda$
    \\
    \hline\hline       
    Claus \cite{claus81}
           & $\pi= e_2$ &   9 (5\rlap) & $9\times 9$, posit{ive}, \inp
                                                      & $\eeta=e_1$
                           & $>1/9$ 
    \\
    Claus \cite{claus81}
           & $\pi= e_2$ & 2 &  $65\times 65$, posit{ive}, \inp & $\eeta=e_1$
                           & $>1/65$ 
    \\
           & $\pi= e_2$ & 2 &  \llap($37\times 37$), posit{ive}, \inp & $\eeta=e_1$
                           & $>1/37$ 
    \\
    B.\&C.~\cite{blondel-canterini-03}
           & $\pi= e_2$
             &2  & $46\times 46$, 
                             posit{ive}, \inp
                                                & 
                                                  $\eeta=e_1$
                                                        &  $>1/46$ or $\ge1/46$ 
    \\
           & $\pi= e_2$
             &2  & \llap($34\times 34$), 
                             posit{ive}, \inp
                                                & 
                                                  $\eeta=e_1$
                                                        &  $>1/34$ or $\ge1/34$ 
    \\
    Hirv.~
 \cite{hirvensalo07}           & $\pi=e_2$ & 2 & $25\times 25$, 
                             posit{ive}, \inp
  & $\eeta=e_1$ &  $>1/25$ 
\\   
           & $\pi=e_2$ & 2 & 
                             \llap($20\times 20$),
                             posit{ive}, \inp
                                                      & $\eeta=e_1$ &  $>1/20$ 
    \\
\hline       
  \end{tabular}
  
  \caption{The main characteristics of the data $\pi$,
    $\mathcal{M}$, and $\eeta$ for 
 different
 undecidable versions of PFA Emptiness.
    The vectors
    $e_1$ and $e_2$ are two standard unit vectors of appropriate
    dimension,
    indicating a single accepting state or a single deterministic
    start state.
    The results under the double line are previous results from
    the literature.
The references are to Claus
 \cite[Theorem 6(iii)]{claus81} from 1981,
  Blondel and Canterini
  \cite[Theorem~2.1]{blondel-canterini-03} from 2003, and Hirvensalo
  \cite[Section~3]{hirvensalo07}
  from 2007. 
  The numbers in parentheses 
  are the figures that
  would be obtained by basing the proofs on Neary's undecidable 5-word
instances of the Post Correspondence Problem (PCP)
    instead of the smallest known
    undecidable PCP instances that were current
    at the time\ifanonymous.
\else, see
\cite[Theorems~5 and~6]{r-pfaeu-24}. 
\fi   
  }
  \label{tab:results}
\end{table}

\subsection{Statement of results}
\label{sec:results}
There are three different 
proofs of
the basic undecidability
theorem
 in the literature.
The first proof is due to
Masakazu Nasu and Namio Honda~%
\cite{NasuHonda1969} from 1969.\footnote
{This proof is commonly
  misattributed to Paz~\cite{paz71},
  although Paz gave 
  credit
to Nasu and Honda
\cite[Section IIIB.7, Bibliographical notes, p.~193]{paz71}.}
The second proof, by Volker Claus \cite{claus81}, is loosely related to
this proof.
All these proves use a reduction from the undecidability of the
Post Correspondence Problem (PCP,
see \autoref{sec:PCP-intro}).
A completely independent proof, which
 uses a very different approach,
was given by
Anne Condon and Richard J. Lipton~%
\cite{condon-lipton-1989} in 1989,
based on ideas of
R{\={u}}si{\c{n}}{\v{s}}
Freivalds~\cite{freivalds81} from 1981.
\ifanonymous\else
The somewhat intricate 
history is 
described
in \cite[Section~3]{r-pfaeu-24} 
 as part of an extensive survey of the various undecidability proofs.
\fi


PFA Emptiness remains undecidable under various constraints on the
number of transition matrices (size of the alphabet) and their
size (number of states).
The first result in this direction is due to
 Claus
 \cite{claus81} from 1981. Later improvements were made
 by
  Blondel and Canterini
  \cite{blondel-canterini-03} and Hirvensalo
  \cite{hirvensalo07}, who were apparently unaware of~\cite{claus81}
and concentrated on the case of
  two matrices (\emph{binary} input alphabet).
An overview of these results is shown in \autoref{tab:results} below the double line.
  
We improve these bounds, concentrating on 
 the minimum number of states without restricting the
 number of matrices,
see \autoref{thm-matrices-forward}.
(Such results are only implicit in the proofs of \cite{blondel-canterini-03} and
\cite{hirvensalo07}.)
Undecidability even holds for a PFA where all data except the
starting distribution $\pi$ are fixed.

We also get improved bounds for the 2-matrix case.
For the variation where the output vector $f$ is not
restricted to a 0-1-vector, we can reduce the number of
states to~6
(\autoref{thm-matrices-backward}).

Our results are stated in
the following theorems and summarized in
\autoref{tab:results}.

\begin{theorem}
  \label{thm-matrices-forward}
The PFA Emptiness Problem 
 is undecidable for PFAs with a
 single accepting state and the following restrictions on the number
 of transition matrices (size of the input alphabet)
 and their size (number of states):
  \begin{enumerate}[\rm(a)]
  \item \label{4-positive}
    $\mathcal{M}$ consists of
    4 
    positive doubly-stochastic transition matrices of size $7\times 7$, with
    cutpoint $\lambda=1/7$.
  \item \label{5-fixed}
    $\mathcal{M}$ consists of
    6 
    fixed positive doubly-stochastic transition matrices of size $7\times 7$, with
    cutpoint $\lambda=1/7$.
    The only variable input is the starting distribution $\pi\in [0,1]^7$.
  \item  \label{2-matrices}
    $\mathcal{M}$ consists of
    2 
    positive transition matrices of size $18\times 18$, with cutpoint
    $\lambda=1/18$.
  \item  \label{2-matrices-fixed}
    $\mathcal{M}$ consists of
    2 
    fixed
    positive transition matrices of size $28\times 28$, with cutpoint
    $\lambda=1/28$.
\end{enumerate}   
All statements hold also for weak inequality $(\ge\lambda)$ as the  acceptance criterion.
\end{theorem}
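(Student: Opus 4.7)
My plan is to reduce from the Post Correspondence Problem (PCP). I would rely on Neary's theorem that PCP is undecidable already for instances with only $5$ word pairs $(u_1,v_1),\dots,(u_5,v_5)$ over the binary alphabet (this is the source of the figures shown in parentheses in \autoref{tab:results}). For each such pair~$i$, I construct a positive doubly-stochastic $7\times 7$ matrix $M_i$ that, informally, appends $u_i$ to a numerical encoding of the ``top'' string and $v_i$ to the encoding of the ``bottom'' string currently stored in the state, where strings are encoded as base-$b$ fractional rationals. Running such matrices on an input word $i_1 i_2 \ldots i_m$ causes the state to carry the encodings of $u_{i_1}\cdots u_{i_m}$ and $v_{i_1}\cdots v_{i_m}$. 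Because each $M_i$ is doubly-stochastic, the uniform distribution is invariant, and the probability of the single accepting state $e_1$ is $1/7$ plus a correction term that quantifies the discrepancy between the two encoded strings. The cutpoint $1/7$ is therefore exceeded exactly when the two strings coincide, i.e., when there is a PCP solution.

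For part \ref{4-positive}, I need only $4$ matrices rather than~$5$. One matrix can be saved by absorbing the first PCP step into the starting distribution $\pi$: $\pi$ encodes the numerical state of the PFA as if one fixed pair had already been read, and the remaining $4$ pairs supply the $4$ input matrices. Strict positivity of the $M_i$ is obtained by blending the ``genuine'' stochastic action with a small admixture of the uniform matrix (all entries $1/7$), losing only a constant factor in the gap between the two cases.

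For part \ref{5-fixed}, the matrices must be independent of the specific PCP instance. I would replace the instance-specific $M_i$ by a small collection of universal ``shift and append a digit'' operations together with a dispatch matrix, so that any particular pair $(u_i,v_i)$ can be reproduced by a pre-agreed short sequence of letters over a fixed alphabet of size~$6$; all instance-dependent data then live entirely in $\pi$, which encodes the PCP pairs as numerical coefficients. For parts \ref{2-matrices} and \ref{2-matrices-fixed}, the alphabet is pushed down to binary by binary-coding the earlier larger alphabet: two matrices $A$ and $B$ are chosen so that products of the form $A^j B$ simulate the selection of the $j$-th original matrix, with auxiliary states counting the position inside a binary codeword and only committing to the encoded operation after a full codeword has been read. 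This padding explains the jump from $7$ to $18$ (respectively $28$) states, with the extra states in the fixed version housing what was previously stored in additional positions of $\pi$. The weak-inequality variant $\ge\lambda$ follows from the strict version by a standard rational-perturbation argument that shifts the cutpoint by an arbitrarily small amount.

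The main obstacle will be the extremely tight state budget: squeezing the encodings of two growing numerical strings, the doubly-stochastic normalization mass, and the discrepancy measurement into only $7$ states forces the same states to serve multiple roles, and one must verify that all transition entries remain strictly positive rationals while both double-stochasticity and the quantitative gap around the cutpoint $1/d$ are preserved. The analogous juggling for the binary-alphabet cases, where the padding states must not inadvertently create ``spurious'' words whose acceptance probability also exceeds $1/d$, is the second delicate point of the construction.
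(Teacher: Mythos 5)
Your overall strategy for parts \ref{4-positive}, \ref{2-matrices} matches the paper's: encode the two PCP strings as fractional base-$b$ values inside the matrix entries (the paper uses negative powers of $10$ in a $6\times6$ Paterson/Claus-style matrix, then adds one state and mixes with the all-$1/7$ matrix $J$ to get positive doubly-stochastic $7\times7$ matrices), exploit invariance of the uniform distribution so that the acceptance probability is $1/7$ plus a sign-definite discrepancy term, absorb the forced first pair of Neary's $5$-pair instances into $\pi$ to get down to $4$ matrices, and binary-code the alphabet \`a la Hirvensalo for the two-matrix cases. Two points, however, are genuine gaps rather than omitted detail.

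First, your plan for part \ref{5-fixed} does not work as described. If the fixed matrices are universal ``shift and append a digit'' operations and the instance-dependent word pairs are reproduced by \emph{pre-agreed sequences} of input letters, then the existential quantifier in the emptiness problem ranges over \emph{all} words over the fixed alphabet, not only over concatenations of your agreed codeword sequences. An adversarial input could then append digits to the top and bottom encodings in unintended combinations (e.g., append to the top string without appending the matching bottom word), making the two encodings equal and exceeding the cutpoint even when the PCP has no solution. The atomicity of appending a \emph{pair} $(v_i,w_i)$ in one matrix is exactly what makes the reduction sound, and your dispatch scheme destroys it. The paper avoids this entirely by invoking a different known result: Matiyasevich and S\'enizergues construct undecidable $7$-pair PCP families in which, as observed by Halava, Harju and Hirvensalo, \emph{all words except $v_1$ are fixed}; the single variable matrix $A_1$ is then merged into $\pi$, leaving $6$ genuinely fixed pair-matrices. (The same device, fed through the binary-coding lemma with $k=6$, $d=6$, gives $(k-1)(d-1)+1=26$ plus $2$ states for stochasticization, i.e.\ the $28$ of part \ref{2-matrices-fixed}; the $18$ of part \ref{2-matrices} is $(4-1)(6-1)+1=16$ plus $2$.)

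Second, your justification of the weak-inequality variant by ``shifting the cutpoint by an arbitrarily small amount'' fails: the positive gap above the cutpoint in the accepting case is of order $\alpha^m 10^{-2|v|}/99$ and tends to $0$ as the solution length $m$ grows, so no fixed perturbation of $\lambda$ separates the two cases. The paper's argument is that the discrepancy expression $(0.v-0.w)^2-10^{-2|v|}/99$ is provably \emph{never zero} (the Equality Detection Lemma), hence the value $\lambda$ itself is never attained and $>\lambda$ and $\ge\lambda$ define the same language.
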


\begin{theorem}
  \label{thm-matrices-backward}
For any cutpoint $\lambda$ in the interval $0<\lambda<1$,  
the PFA Emptiness Problem 
with an output vector $\eeta$ with entries $0\le\eeta_i\le1$
is undecidable for PFAs with
 the following restrictions on the number
 of transition matrices (size of the input alphabet)
 and their size (number of states):
  \begin{enumerate}[\rm(a)]
  \item \label{backward}
    $\mathcal{M}$ consists of
    4 
    positive
 transition matrices of size $6\times 6$.
   There is a fixed deterministic start state.
  \item \label{backward-fixed}
    $\mathcal{M}$ consists of
    5 
    fixed
 positive 
    transition matrices of size $6\times 6$.
   There is a fixed starting distribution~$\pi$.
    The only input of the problem
   is the vector $\eeta\in [0,1]^6$ of output values.
\end{enumerate}
All statements hold also for weak inequality $(\ge\lambda)$ as the  acceptance criterion.
\end{theorem}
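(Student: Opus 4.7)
The starting point is the identity
\[
\pi^T M_{i_1} M_{i_2} \cdots M_{i_m} f = f^T M_{i_m}^T \cdots M_{i_2}^T M_{i_1}^T \pi,
\]
which rewrites any ``forward'' acceptance computation as a ``backward'' one by transposing every matrix and swapping the roles of $\pi$ and $f$. Two facts make this useful. First, transposes of doubly-stochastic matrices are again doubly-stochastic and positive, so the PFA families constructed in \autoref{thm-matrices-forward} survive the reversal. Second, every probability distribution lies in the unit cube $[0,1]^d$, so the variable input $\pi$ of the forward setting produces a legitimate variable output vector $f$ in the backward setting (the constraint ``sums to $1$'' is merely dropped, which only enlarges the input space). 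Applied to \autoref{thm-matrices-forward}\ref{5-fixed} this already yields a backward instance with fixed start $\pi=e_1$, $6$ fixed matrices of size $7\times 7$, and variable output $f\in[0,1]^7$, which is a weakened form of part~\ref{backward-fixed}.

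To reach the claimed $6$ states and $5$ matrices, I would revisit the specific PCP-based construction behind \autoref{thm-matrices-forward} and exploit two independent simplifications. The state reduction ($7\to 6$) uses the new freedom to put arbitrary entries of $[0,1]$ into~$f$: a state of the forward PFA whose sole role is to act as a ``normalisation buffer,'' so that the acceptance value $\pi^T M e_1$ is an affine function $\alpha p_j+\beta$ of the mass $p_j$ on some other state, can be merged into that state in the backward direction, while the correction $(\alpha,\beta)$ is absorbed into~$f$. The matrix reduction ($6\to 5$ in part~\ref{backward-fixed}) is expected to eliminate an ``initialization'' letter: in the forward direction, one fixed matrix is needed to convert the arbitrary distribution $\pi$ into a canonical deterministic state before the PCP simulation begins, but in the backward direction $\pi=e_1$ is already deterministic and no initialization step is necessary. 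Part~\ref{backward} follows from \autoref{thm-matrices-forward}\ref{4-positive} by the same state-merging step; the alphabet size $4$ is preserved under transposition.

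Flexibility of the cutpoint $\lambda\in(0,1)$ comes from an affine change of the output vector. Writing $M:=M_{i_1}\cdots M_{i_m}$ for a product of stochastic matrices we have $M\mathbf{1}=\mathbf{1}$, and hence replacing $f$ by $f'=\alpha f+\beta\mathbf{1}$ rescales the acceptance value as $\alpha(\pi^T M f)+\beta$. Given the natural threshold $\lambda_0$ inherited from the forward theorem and any target $\lambda\in(0,1)$, I would choose $\alpha>0$ and $\beta\ge 0$ with $\alpha+\beta\le 1$ and $\beta=\lambda-\alpha\lambda_0$; this is feasible by setting $\alpha=\lambda/\lambda_0$ when $\lambda\le\lambda_0$ and $\alpha=(1-\lambda)/(1-\lambda_0)$ when $\lambda\ge\lambda_0$. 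The vector $f'$ then lies in $[0,1]^6$ and satisfies $\pi^T M f'>\lambda$ exactly when $\pi^T M f>\lambda_0$, so undecidability at $\lambda_0$ transfers to every $\lambda\in(0,1)$. The equivalence of $>\lambda$ and $\ge\lambda$ then follows by the standard argument that shifting the cutpoint by an arbitrarily small amount reduces one case to the other.

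The hard part will be the identification of the superfluous state and the superfluous letter: these are not mechanical consequences of the transpose and will require a careful reading of the PCP encoding underlying \autoref{thm-matrices-forward}. Once the right state and letter are isolated, the remaining verifications (positivity under transposition, correctness of the reduced PFA on the PCP instance, and admissibility of the affine output vector constructed in the cutpoint step) should be routine.
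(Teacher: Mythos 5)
Your high-level plan --- transpose the matrices, reverse the product, swap the roles of $\pi$ and $f$, and shift the cutpoint by an affine change of the output vector --- matches the paper's, and your cutpoint argument is essentially Section~\ref{sec:manipulate}. But the step you yourself flag as ``the hard part'' (getting from $7$ states down to $6$, and from $6$ matrices down to $5$) is exactly where the paper's proof does something you have not found, and your proposed route would not get there. The paper does \emph{not} start from the finished $7\times 7$ doubly-stochastic PFAs of \autoref{thm-matrices-forward} and then merge away a ``normalisation buffer'' state; there is no such mergeable state, and collapsing a state of a family of stochastic matrices while preserving both stochasticity and the multiplicative structure is not a routine operation. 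Instead, the proof branches off the forward construction \emph{earlier}: the $6\times 6$ matrices $B_i=V^{-1}A(v_i,w_i)V$ obtained after \stepref2 are already column-stochastic and, for nonempty $v_i,w_i$, strictly positive (this is precisely why part~2 of \autoref{lem:column-stochastic} is proved, even though the forward theorems never use it). Transposing them gives positive row-stochastic $6\times 6$ matrices directly, so Steps 3--5 --- which are the sole source of the seventh state --- are simply never performed. The reversed output vector is $\pi_2^T$, which has negative entries; it is made admissible by the shift $f_7=\bigl((2,2,2,2,2,2)+\pi_2\bigr)^T/12$, and this is the one place where the fractional output vector is genuinely needed.

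Your accounting of the alphabet sizes is also off. In part~\ref{backward} the end pair of Neary's instance is merged into the \emph{output} vector ($f_8=B_5^Tf_7$), which is only legal because $f$ may be fractional; the start vector stays a unit vector. In part~\ref{backward-fixed} the reduction from $7$ word pairs to $5$ matrices does not come from dropping an ``initialization letter'': it comes from merging matrices at \emph{both} ends of the reversed product, $B_2^T$ into the (fixed) start vector and $B_1^T$ into the output vector --- the latter being the only variable matrix, since only $v_1$ varies in the Matiyasevich--S\'enizergues/Halava--Harju--Hirvensalo instances. (A further wrinkle you could not have anticipated: one rule of the underlying semi-Thue system produces an empty word, which would violate the positivity hypothesis of \autoref{lem:column-stochastic}, and the paper patches the Claus reduction to avoid this.) Finally, your closing remark that $>\lambda$ versus $\ge\lambda$ ``follows by shifting the cutpoint by an arbitrarily small amount'' is not a valid argument; the paper's justification is that the acceptance value is never exactly equal to the threshold (\autoref{lemma:modif-threshold} excludes the value $0$), so the two criteria define the same language.
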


By combining the different proof techniques, one can extend
\autoref{thm-matrices-backward} to PFAs with \emph{two matrices}
and a fixed start state or starting distribution,
analogous to
\autoref{thm-matrices-forward}\ref{2-matrices}--\ref{2-matrices-fixed}, but
we have not worked out these results.

%

\ifanonymous\else
Some weaker results with fixed matrices were previously obtained
in a technical report \cite[Theorems~2 and~4]{r-pfaeu-24}. 
For example, PFA Emptiness was shown to be undecidable for 52 fixed $9\times9$
matrices,
in the setting
of \autoref{thm-matrices-backward}
where the final vector $f$ is the only input,
with acceptance criterion~$\ge1/2$.
For acceptance with strict inequality ($>1/4$),
52 fixed
matrices of size  $11\times11$ were needed.
In these constructions, the PCP was derived from a universal Turing machine.
For the case of 2 fixed matrices with
a single accepting state, where the start distribution $\pi$ is the only  input, a bound of 572
states was obtained
\cite[Theorem~3]{r-pfaeu-24}. 

\fi

\ifnotkurz

\subsection{Contributions of this paper and relations to other problems}
\label{sec:contributions}
The most striking result of the paper is perhaps that
PFA Emptiness is  undecidable already for a \emph{fixed} PFA,
where only the starting distribution~$\pi$, or the output vector $f$
is an input.
This result is not deep; it follows quite readily from the combination of known
ideas:
The main observation, namely that the
reduction of
 Matiyasevich and Sénizergues~\cite{matiyasevich2005decision}
 from undecidable semi-Thue systems leads to instances of
 the Post Correspondence problem (PCP) where only the starting pair is variable,
 has been made by Halava, Harju, and Hirvensalo in 2007
 \cite[Theorem~6]
 {halava07-Claus-instances}.
The idea of merging the first or last matrix into the starting
distribution $\pi$
or into the final vector $f$ was used by Hirvensalo in
2007~\cite[Step~2 in Section~3]{hirvensalo07}.

The major technical contribution is the reduction of the number of
states to six, even for PFAs, and even while keeping the matrices
fixed,
at least for the version where the input is the (fractional) output vector~$f$.

\ifnotkurz
Weighted automata
are in some ways more natural than PFAs, and these
automata, in particular rational-weighted automata,
have recently attracted a lot of attention.
Since they generalize PFAs, all our 
undecidability results immediately carry over to rational-weighted
automata.

The PFA Emptiness Problem is a problem about matrix products.
There is a great variety of problems about matrix products
whose undecidability has been studied;
see~\cite{Advances-Semigroups-2023} for a recent survey of this rich
area.
In fact the first problem
outside
the 
fields of
mathematical logic and 
theory of computing
 that was shown to be undecidable is
a problem about matrix products:
Markov proved in 1947
that it is undecidable whether two semigroups
generated two finite sets of matrices contain
a common element~\cite{Markov-1947};
see Halava and
Harju~\cite{halava-harju-07_markov_undec_theor_integ_matric}
for a presentation of Markov's proof (and for an alternative proof).
Our basic approach is the same as Markov's: to model
the Post Correspondence Problem (PCP) by matrix products.
Our particular technique for doing this has its
roots in
 Paterson's undecidability proof~\cite{paterson70}
from 1970
of the \emph{matrix mortality problem} for $3\times 3$
 matrices. 






\fi

Besides, 
we have brought to the light some papers that were apparently
forgotten, like
Nasu and Honda's original undecidability proof from 1969~\cite{NasuHonda1969},
and Claus~\cite{claus81}.
Also, our technique for converting matrices with row sums~0 to positive matrices
with row sums~1 (hence stochastic matrices) in Section
\ref{sec:step4} 
is more streamlined and elegant than the proofs that we have seen in
the literature.

\fi


\subsection{The Post Correspondence Problem (PCP)}
\label{sec:PCP-intro}

In the \emph{Post Correspondence Problem} (PCP),
we are given a list of pairs of words $(v_1,w_1),
\allowbreak
(v_2,w_2), \ldots,
(v_k,w_k)$. 
 The problem is to decide if there is a nonempty sequence
$i_1i_2\ldots i_m$
of indices $i_j \in \{1,2,\ldots,k\}$ such that
\begin{displaymath}
  v_{i_1}v_{i_2}\ldots v_{i_m}
=  w_{i_1}w_{i_2}\ldots w_{i_m}
\end{displaymath}
This is one of the well-known undecidable problems.
According to Neary~\cite{neary:PCP5:2015},
 the PCP is already undecidable with as few as five
 word pairs.

\section{Proofs of \autoref{thm-matrices-forward}\ref{4-positive}
  and  \autoref{thm-matrices-forward}\ref{5-fixed} (few states)}
\label{sec:more-than-two}

  We follow the same overall proof strategy as
Claus
 \cite{claus81}, 
  Blondel and Canterini~\cite{blondel-canterini-03} and Hirvensalo~\cite{hirvensalo07}:
  They 
  use undecidable PCP instances with few word pairs and transform them
  to the Emptiness Problem for \emph{integer-weighted} automata, which are
  then
  converted to PFAs.
 We deviate from this approach by using an automaton with \emph{fractional}
 weights  (\autoref{sec:step1}).
These matrices can be converted to column-stochastic matrices without
the overhead of an extra state (\autoref{sec:step2}).

 The proof of \autoref{thm-matrices-forward}\ref{4-positive} contains
 the main ideas.
For the reduction to two matrices, we apply a technique of 
 Hirvensalo~\cite{hirvensalo07} (\autoref{sec:reduce-to-2}).
 All other results are obtained by slight variations of these methods
in combination with appropriate results from the literature.

\subsection{Step 1: Modeling word pairs by matrices}
\label{sec:step1}
For a string $u=u_1u_2\ldots u_n$ of decimal digits,
we denote its fractional 
 decimal value by
$0.u=
\sum_{j=1}^n u_j\cdot 10^{-j}$. For example, if $u=\mathtt{432100}$,
then
$0.u=0.4321$.
We will take care to avoid 
trailing zeros, because their disappearance could cause problems.

For two strings $v,w$ of digits in $\{\mathtt{11},\mathtt{12}\}^*$, we define the matrix
\begin{displaymath}
  A_0(v,w) =
\begin{pmatrix}
1 & 0 & 0 & 0 & 0 & 0 \\
0.v & 10^{-|v|} & 0 & 0 & 0 & 0 \\
(0.v)^2 & 2 \cdot 10^{-|v|}\cdot 0.v & 10^{-2|v|} & 0 & 0 & 0 \\
0.w & 0 & 0 & 10^{-|w|} & 0 & 0 \\
(0.w)^2 & 0 & 0 & 2 \cdot 10^{-|w|}\cdot 0.w & 10^{-2|w|} & 0 \\
0.v\cdot0.w & 10^{-|v|}\cdot 0.w & 0 & 10^{-|w|}\cdot 0.v & 0 & 10^{-|v|-|w|}
\end{pmatrix}.
\end{displaymath}
It is not straightforward to see, but it can be checked by a
simple 
calculation 
that the matrices $A_0(v,w)$ satisfy a
multiplicative law
(see \Autoref{sec:check} for a computer check):

\begin{lemma}[Multiplicative Law]
  \label{lem:multi}
\begin{equation}
  \label{eq:multi_0}
 A_0 (v_1,w_1)
 A_0 (v_2,w_2) =
 A_0 (v_1v_2,w_1w_2)
\end{equation}
\end{lemma}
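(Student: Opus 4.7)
My plan is to prove the multiplicative law \eqref{eq:multi_0} by interpreting $A_0(v,w)$ as the matrix of a polynomial substitution, so that the identity becomes the functoriality of composition of substitutions, together with the two elementary concatenation identities
\[
0.(v_1v_2) = 0.v_1 + 10^{-|v_1|}\cdot 0.v_2, \qquad 10^{-|v_1v_2|} = 10^{-|v_1|}\cdot 10^{-|v_2|}
\]
(and the analogues for $w$), which are immediate from the digit expansion of a decimal fraction.

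Concretely, let $V\subset\mathbb{Q}[x,y]$ be the six-dimensional subspace spanned by the monomials $B=(1,\,x,\,x^{2},\,y,\,y^{2},\,xy)$, and for each pair $(v,w)$ define a linear endomorphism $\phi_{(v,w)}\colon V\to V$ by polynomial substitution, $p(x,y)\mapsto p\bigl(0.v+10^{-|v|}x,\,0.w+10^{-|w|}y\bigr)$. The one real thing to check at this step (a one-line computation for each basis monomial) is that $\phi_{(v,w)}$ indeed maps $V$ into $V$ and not into a larger polynomial space: this works precisely because the expansion of $\phi_{(v,w)}(xy)$ produces no $x^{2}y$, $xy^{2}$, or $x^{2}y^{2}$ terms, and the expansions of $\phi_{(v,w)}(x^{2})$ and $\phi_{(v,w)}(y^{2})$ never involve $y$ or $x$ respectively. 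Expanding each $\phi_{(v,w)}(b_i)$ in the basis $B$ then reproduces exactly row $i$ of $A_0(v,w)$, by direct inspection of the matrix.

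With this identification, \eqref{eq:multi_0} is just associativity of composition for substitution maps. Using the row-image convention above, the matrix product $A_0(v_1,w_1)\,A_0(v_2,w_2)$ represents the composition $\phi_{(v_2,w_2)}\circ\phi_{(v_1,w_1)}$, which sends $p$ to $p\bigl(0.v_1+10^{-|v_1|}\cdot 0.v_2+10^{-|v_1|-|v_2|}x,\ 0.w_1+10^{-|w_1|}\cdot 0.w_2+10^{-|w_1|-|w_2|}y\bigr)$. The concatenation identities collapse the inner arguments to $0.(v_1v_2)+10^{-|v_1v_2|}x$ and $0.(w_1w_2)+10^{-|w_1w_2|}y$, so this composition coincides with $\phi_{(v_1v_2,\,w_1w_2)}$, whose matrix is exactly $A_0(v_1v_2,w_1w_2)$.

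The main obstacle is purely bookkeeping: verifying closure of $V$ under every $\phi_{(v,w)}$, matching rows versus columns consistently, and taking care that the order of composition comes out as $(v_1v_2,w_1w_2)$ rather than its reverse. As a purely mechanical alternative one can simply check the $21$ nonzero entries of the lower-triangular product $A_0(v_1,w_1)\,A_0(v_2,w_2)$ against those of $A_0(v_1v_2,w_1w_2)$ one at a time, which is also what the promised computer verification accomplishes as a sanity backup.
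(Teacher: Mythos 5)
Your proof is correct, and it takes a genuinely different route from the paper: the paper offers no conceptual argument at all for \autoref{lem:multi}, stating only that the identity ``can be checked by a simple calculation'' and delegating the entry-by-entry verification of the $21$ nonzero entries to the \texttt{sagemath} script in \autoref{sec:check}. Your substitution interpretation --- $A_0(v,w)$ as the matrix, in the basis $(1,x,x^2,y,y^2,xy)$ of polynomials of degree at most~$2$, of the affine substitution $x\mapsto 0.v+10^{-|v|}x$, $y\mapsto 0.w+10^{-|w|}y$ --- reduces the lemma to the concatenation identity $0.(v_1v_2)=0.v_1+10^{-|v_1|}\cdot 0.v_2$ together with the (easily verified) closure of that six-dimensional space under affine substitutions, and it explains \emph{why} the law holds rather than merely confirming that it does; it also makes transparent why six is the natural dimension (the point raised in \autoref{sec:outlook}) and connects to the remark in the ``History of ideas'' paragraph that these matrices extend Paterson's construction to ``produce quadratic terms.'' Two small points of care, both of which you handle: the row-image convention is what makes the correspondence a homomorphism ($M_{\phi}M_{\psi}$ representing ``first $\phi$, then $\psi$'') so that the concatenation order comes out as $v_1v_2$ and not $v_2v_1$; and the identity is functoriality of the assignment $(v,w)\mapsto\phi_{(v,w)}$ rather than ``associativity,'' a wording slip that does not affect the argument. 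What the paper's approach buys is brevity and machine-checkable certainty; what yours buys is insight, at the cost of the bookkeeping you acknowledge.
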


Now we transform these matrices
$A_0 (v,w)$
by a similarity transformation with the transformation matrix
\begin{equation}\label{eq:transform-U}
U= \begin{pmatrix}
1 & 0 & 0 & 0 & 0 & 0 \\
0 & 1 & 0 & 0 & 0 & 0 \\
\frac{1}{99} & 0 & 1 & 0 & 0 & 0 \\
0 & 0 & 0 & 1 & 0 & 0 \\
0 & 0 & 0 & 0 & \frac{99}{105} & 0 \\
0 & 0 & 0 & 0 & 0 & 1
\end{pmatrix},\quad
U^{-1}= \begin{pmatrix}
1 & 0 & 0 & 0 & 0 & 0 \\
0 & 1 & 0 & 0 & 0 & 0 \\
-\frac{1}{99} & 0 & 1 & 0 & 0 & 0 \\
0 & 0 & 0 & 1 & 0 & 0 \\
0 & 0 & 0 & 0 & \frac{105}{99} & 0 \\
0 & 0 & 0 & 0 & 0 & 1
\end{pmatrix}.
\end{equation}
The resulting matrices
$A (v,w) = U^{-1} A_0 (v,w) U$
differ in some entries of the first column and the fifth row:
\begin{displaymath}
  \begin{pmatrix}
1 & 0 & 0 & 0 & 0 & 0 \\
0.v & 10^{-|v|} & 0 & 0 & 0 & 0 \\
(0.v)^2 +
\frac{1}{99} (1{-}10^{-2|v|})
& 2 \cdot 10^{-|v|}\cdot 0.v & 10^{-2|v|} & 0 & 0 & 0 \\
0.w & 0 & 0 & 10^{-|w|} & 0 & 0 \\
\frac{99}{105}\cdot
(0.w)^2 & 0 & 0 &\!\!\!\!
\frac{198}{105} \cdot 10^{-|w|}\cdot 0.w & 10^{-2|w|} & 0 \\
0.v\cdot0.w & 10^{-|v|}\cdot 0.w & 0 & 10^{-|w|}\cdot 0.v & 0 &
\!\!\!\!
10^{-|v|-|w|}
                \end{pmatrix}
              \end{displaymath}
Since the matrices were obtained by a similarity, they satisfy the same multiplicative law:
\begin{equation}
  \label{eq:multi}
 A (v_1,w_1)
 A (v_2,w_2) =
 A (v_1v_2,w_1w_2).
\end{equation}         
With the vectors
$\pi_1=(\frac{1}{99},0,-1,0,-\frac{105}{99},2)$ and $f_1=(1,0,0,0,0,0)^T$,
we obtain
\begin{align}\nonumber
  \pi_1 A(v,w) f_1
 & = \tfrac{1}{99}
   - (0.v)^2 -\tfrac{1}{99} +10^{-2|v|}/99 -(0.w)^2
   +
   2\cdot 0.v\cdot0.w 
\\ & =
     - (0.v-0.w)^2 +10^{-2|v|}/99
     \label{eq:almost-zero}
\end{align}
The sign of this expression can detect inequality of $v$ and $w$, as
we
will presently show.

We could have taken the simpler matrix $A_0$ 
with $\pi_0=(0,0,-1,0,-1,2)$, and this would give
$\pi_0 A_0(v,w) f_1 = - (0.v-0.w)^2$. The contortions with the matrix
$U$
were necessary to generate a tiny positive term in \eqref{eq:almost-zero}.
The reason for the peculiar entries $\frac{99}{105}$ and $\frac{105}{99}$
in \eqref{eq:transform-U} and in $\pi_1$
will become
apparent after the next transformation.

\subsection{Equality detection}
\label{sec:equality}

\begin{lemma}[Equality Detection Lemma] \label{lemma:modif-threshold}
  Let $v,w \in \{\mathtt{11},\mathtt{12}\}^*$.
  \begin{enumerate}
  \item       If $v=w$,
    then
    $(0.v-0.w)^2-10^{-2|v|}/99 < 0$.
  \item       If $v\ne w$,
    then
    $(0.v-0.w)^2-10^{-2|v|}/99 > 0$.
  \end{enumerate}
 In particular, $(0.v-0.w)^2-10^{-2|v|}/99$ is never zero.
\end{lemma}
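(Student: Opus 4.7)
When $v=w$ the expression equals $-10^{-2|v|}/99<0$, so the first assertion is immediate. For $v\ne w$ my plan is a short case analysis that exploits two structural features of strings in $\{\mathtt{11},\mathtt{12}\}^*$: every digit lies in $\{\mathtt{1},\mathtt{2}\}$, and the two codewords have equal length, so codeword boundaries align whenever one string is a prefix of the other. I would split according to whether one of $v,w$ is a prefix of the other, or whether they ``branch'' at some position in a common prefix.

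In the \emph{prefix case}, say $v$ is a proper prefix of $w$; then $w=vu$ for a nonempty $u\in\{\mathtt{11},\mathtt{12}\}^*$, so $|0.v-0.w|=10^{-|v|}\cdot 0.u\ge 10^{-|v|}\cdot 0.11$. Squaring yields $(0.v-0.w)^2\ge(0.11)^2\cdot 10^{-2|v|}>10^{-2|v|}/99$, the key numerical inequality being $99\cdot(0.11)^2=1.1979>1$. If instead $w$ is a proper prefix of $v$, the same bound applies with $|w|$ in place of $|v|$, which is only stronger since $|w|<|v|$.

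In the \emph{branching case}, $v$ and $w$ share a common prefix of length $|p|<\min(|v|,|w|)$ and disagree at position $|p|+1$; the two differing digits lie in $\{1,2\}$ and hence differ by exactly~$1$. I would estimate
\[
  |0.v-0.w|\;\ge\;10^{-|p|-1}-\Bigl|\sum_{j\ge|p|+2}(v_j-w_j)\,10^{-j}\Bigr|,
\]
bounding the tail by $\sum_{j\ge|p|+2}2\cdot 10^{-j}=\tfrac{2}{9}\cdot 10^{-|p|-1}$ (using $|v_j-w_j|\le 2$, which is valid even where only one of the strings still has a digit). This gives $|0.v-0.w|\ge \tfrac{7}{9}\cdot 10^{-|p|-1}\ge \tfrac{7}{9}\cdot 10^{-|v|}$, so $(0.v-0.w)^2\ge(49/81)\cdot 10^{-2|v|}>10^{-2|v|}/99$, since $49\cdot 99=4851>81$.

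There is no genuine obstacle; the whole argument is an elementary case split with explicit geometric-series bounds. The only slightly delicate point is the tight numerical margin in the prefix case, whose validity depends crucially on the two codewords beginning with the digit~$\mathtt{1}$ and containing no larger digits — this is what keeps $0.u$ safely above $1/\sqrt{99}$ and what explains the otherwise mysterious factor $1/99$ appearing in the similarity transformation~$U$.
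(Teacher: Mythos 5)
Your proof is correct and follows essentially the same route as the paper: the same two-case split (one string a prefix of the other versus a first disagreement inside a common prefix), with the identical key estimate $|0.v-0.w|\ge 10^{-|v|}\cdot 0.11$ and the comparison $(0.11)^2=0.0121>1/99$ in the prefix case. The only cosmetic difference is in the branching case, where you use a generic geometric-series tail bound (margin $\tfrac79\cdot10^{-|p|-1}$) while the paper exhibits the extremal pair $0.u\mathtt{12}$ versus $0.u\mathtt{1112\overline{12}}$ using the codeword structure; both give the required inequality with ample room.
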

\begin{proof}
  The first statement is obvious.

To see the second statement,
we first consider the case that one of the strings
is a  prefix of
the other 
(Case A):
If $w$ is a prefix of $v$, then for
fixed $w$,
  the smallest possible difference
$|0.v-0.w|$ among the strings $v$ that extend $w$
is achieved 
when
$0.v=0.w\mathtt{11}$.
Similarly,
if $v$ is a prefix of $w$, then
  the smallest possible difference
is achieved 
when
$0.w=0.v\mathtt{11}$.
%
In either case,
$|0.v-0.w| \ge 10^{-\min\{|v|,|w|\}}\cdot 0.11
\ge
10^{-|v|} \cdot 0.11
$.
Thus,
$(0.v-0.w) ^2
\ge
10^{-2|v|} \cdot 0.0121
>10^{-2|v|} \cdot 0.010101\ldots
=10^{-2|v|}/99
$.


Case B: Neither of $v$ and $w$  is a prefix of the other.
  Suppose
  $v$ and $w$
share $k$ leading digit pairs  $u\in \{\mathtt{11},\mathtt{12}\}^k$,
$0\le k<  |v|/2$.
Then one of the two strings
 starts with
$u\mathtt{12}$ and the other with 
$u\mathtt{11}$;
the smallest difference
$|0.v-0.w|$
between
two such numbers
is achieved between 
$0.u\mathtt{12}$ and
$0.u\mathtt{11121212\ldots}$, 
and thus
$|0.v-0.w| > 10^{-2k}\cdot 0.00878787\ldots
> 10^{-2k}\cdot 0.005\ge
10^{-|v|+2}\cdot 0.005=
10^{-|v|}/2$.
After squaring this relation,
the claim follows with an ample margin.
\end{proof}

\subsection{Modeling the Post Correspondence Problem (PCP)}
\label{sec:PCP}

The multiplicative law
 \eqref{eq:multi}
and the capacity to detect string equality are
all that is needed to model the PCP.

In the PCP,
it is no loss of generality to assume that the words in the pairs $(v_i,w_i)
\allowbreak
$
use a binary alphabet, since any alphabet can be coded in binary.
We recode them to the binary ``alphabet''
$\{\mathtt{11},\mathtt{12}\}$,
i.e., they become
words
in $\{\mathtt{11},\mathtt{12}\}^*$
of doubled length.
We form the matrices $A_i=A(v_i,w_i)$.
By multiplicativity,
$A_{i_1}A_{i_2} \ldots A_{i_m} =
A(v_{i_1}v_{i_2} \ldots v_{i_m},
w_{i_1}w_{i_2} \ldots w_{i_m})$, and
by 
\autoref{lemma:modif-threshold},
$  \pi_1A_{i_1}A_{i_2} \ldots A_{i_m}f_1 >0$ if and only if
$v_{i_1}v_{i_2} \ldots v_{i_m}=
w_{i_1}w_{i_2} \ldots w_{i_m}$, i.\,e.,
$i_1i_2 \ldots i_m$ is a solution of the PCP.

Since the value 0 is excluded by
\autoref{lemma:modif-threshold},
nothing changes if the condition ``$>0$'' is replaced by
``$\ge 0$.''
This property will propagate through the proof,
with the consequence that in the resulting theorems,
it does not matter if we
take $> \lambda$ or $\ge\lambda$ as the acceptance criterion
for the PFA. We will not mention this issue any more and work
only with strict inequality.

There are two problems that we still have to solve:
\begin{itemize}
\item The empty sequence ($m=0$) always fulfills the inequality
although it does not count as a solution of the PCP.
\item 
The matrices $A_i$ are not stochastic, and $\pi_1$ is not a
probability distribution. So far, what we have is
a \emph{generalized probabilistic automaton}, or
\emph{rational-weighted automaton}.
\end{itemize}
Turakainen~\cite{turakainen69} showed in 
1969 that a
generalized probabilistic automaton can be converted to a PFA without changing the
 recognized language (with the possible exception of the empty word), and
 this can even be done by adding only two more states
\cite[Theorem~1(i)] {Turakainen_1975}.
Thus, by the general technique of 
\cite{Turakainen_1975}, we can get a PFA with 8 states.
We will use some
tailored version of these techniques,
involving nontrivial techniques,
so that no extra state is needed to make the matrices
stochastic and to simultaneously get rid of the empty solution.

\paragraph{History of ideas.}
The idea of
modeling of the PCP by multiplication of integer matrices
was
pioneered by 
Markov~\cite{Markov-1947} in 1947. He
used the matrices
$(
\begin{smallmatrix}
  1&0\\1&1
\end{smallmatrix}
)$
and $(
\begin{smallmatrix}
  1&1\\0&1
\end{smallmatrix}
)$,
which generate a free semigroup.
A different representation, closer to the one we are using,
goes back to
Paterson~\cite{paterson70}
in 1970, who used it to show that mortality for $3\times 3$
 matrices is undecidable. 

Our matrix $A_0(v,w)$ is a variation of the integer matrices that were
proposed
by Volker Claus \cite{claus81} in 1981,
and again by Blondel and Canterini \cite[p.~235]{blondel-canterini-03} in
2003,
in the very context of constructing small
undecidable instances of the PFA Emptiness Problem.
These matrices extend Paterson's matrices to
larger matrices that can produce quadratic terms.
They 
use \emph{positive} powers of the base 10
and \emph{integer} decimal values $(u)_{10}$ of the strings $u$ instead of fractional
values $0.u$.
Such matrices satisfy a multiplicative law such as~\eqref{eq:multi_0} but
with a reversal of
the factors.
In fact, the method works equally well for any other radix instead of~10.\footnote{ ``The notation is quaternary, decimal, etc., according
  to taste.'' (Paterson~\cite{paterson70}).
Claus
\cite{claus81} 
used radix~3. 
}

Our novel idea is to use this construction
with \emph{negative} powers of 10, leading to something that is roughly at
the same scale as a stochastic matrix, thus facilitating further
transformations.


\subsection{Step 2: Making column sums 1}
\label{sec:step2}
 We apply another similarity transformation, using the matrix
\begin{displaymath}
V= \begin{pmatrix}
1 & 1 & 1 & 1 & 1 & 1 \\
0 & 1 & 0 & 0 & 0 & 0 \\
0 & 0 & 1 & 0 & 0 & 0 \\
0 & 0 & 0 & 1 & 0 & 0 \\
0 & 0 & 0 & 0 & 1 & 0 \\
0 & 0 & 0 & 0 & 0 & 1
\end{pmatrix}, \
V^{-1}= \begin{pmatrix}
1 & -1 & -1 & -1 & -1 & -1 \\
0 & 1 & 0 & 0 & 0 & 0 \\
0 & 0 & 1 & 0 & 0 & 0 \\
0 & 0 & 0 & 1 & 0 & 0 \\
0 & 0 & 0 & 0 & 1 & 0 \\
0 & 0 & 0 & 0 & 0 & 1
\end{pmatrix},
\end{displaymath}
but this time, we also transform the vectors $\pi_1$ and $f_1$,
leaving the overall result 
unchanged:
\begin{align}\nonumber
  \pi_1A_{i_1}A_{i_2} \ldots A_{i_m}f_1 &=
  (\pi_1V)
  (V^{-1}A_{i_1}V)
  (V^{-1}A_{i_2}V) \ldots
  (V^{-1}A_{i_m}V) 
  (V^{-1}f_1) \\&=
  \pi_2B_{i_1}B_{i_2} \ldots B_{i_m}f_2
  \label{eq:B}
\end{align}
We analyze the effect of the similarity transform for a matrix
of the general form
\begin{displaymath}
A= \begin{pmatrix}
1 & 0 & 0 & 0 & 0 & 0 \\
a_{21} & a_{22} & 0 & 0 & 0 & 0 \\
a_{31} & a_{32} & a_{33} & 0 & 0 & 0 \\
a_{41} & a_{42} & a_{43} & a_{44} & 0 & 0 \\
a_{51} & a_{52} & a_{53} & a_{54} & a_{55} & 0 \\
a_{61} & a_{62} & a_{63} & a_{64} & a_{65} & a_{66}
\end{pmatrix} .
\end{displaymath}
The transformed matrix is
\begin{multline*}
B =  V^{-1}AV= \\\ \ \left(\!\!
\begin{array}{l@{\ \ \ }l@{\ \ \ }l@{\ \ }l@{\ \ }l@{\ \ }l@{\,}}
  K &
 K_{12} 
  & K {-} a_{33} {-} a_{43} {-} a_{53} {-} a_{63}
  & K {-} a_{44} {-} a_{54} {-} a_{64}
  & K {-} a_{55} {-} a_{65}
  & K - a_{66} \\
a_{21} & a_{21} + a_{22} & a_{21} & a_{21} & a_{21} & a_{21} \\
a_{31} & a_{31} + a_{32} & a_{31} + a_{33} & a_{31} & a_{31} & a_{31} \\
a_{41} & a_{41} + a_{42} & a_{41} + a_{43} & a_{41} + a_{44} & a_{41} & a_{41} \\
a_{51} & a_{51} + a_{52} & a_{51} + a_{53} & a_{51} + a_{54} & a_{51} + a_{55} & a_{51} \\
a_{61} & a_{61} + a_{62} & a_{61} + a_{63} & a_{61} + a_{64} & a_{61} + a_{65} & a_{61} + a_{66}
\end{array}\right) \!\!\!\!\!\!
\end{multline*}
with
$K= 1-a_{21} - a_{31} - a_{41} - a_{51} - a_{61}$
and $K_{12} =  K {-} a_{22} {-} a_{32} {-} a_{42} {-} a_{52} {-} a_{62}$.

\begin{lemma}
  \label{lem:column-stochastic}
  \begin{enumerate}
  \item the matrix $B=V^{-1}AV$ has column sums
$1$.
\item For $v\ne\epsilon$ and
 $w\ne\epsilon$, the matrix $V^{-1}A(v,w)V$ is
 positive,
and therefore column-stochastic.
\end{enumerate}

\end{lemma}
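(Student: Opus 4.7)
The plan is to attack the two parts separately. For part~1, the cleanest route is to compute
\[
  \mathbf{1}^T B = \mathbf{1}^T V^{-1} A V
\]
by three direct simplifications. Reading off $V^{-1}$, the sum of its rows is $e_1^T = (1,0,0,0,0,0)$ (each column after the first contains a single $-1$ in row~1 cancelling a single $+1$ further down). Next, because the first row of $A$ is already $e_1^T$, we have $e_1^T A = e_1^T$. Finally, $e_1^T V$ picks off the first row of $V$, which is $\mathbf{1}^T$. Chaining these three identities yields $\mathbf{1}^T B = \mathbf{1}^T$.

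For part~2, I would split $B = V^{-1} A(v,w) V$ into its first row and its rows $i \ge 2$. Reading off the explicit expression for $B$ displayed just before the lemma, each entry $b_{ij}$ with $i \ge 2$ is either $a_{i1}$ or $a_{i1} + a_{ij}$, where all entries of $A(v,w)$ are nonnegative. It then suffices to check that each $a_{i1}$ for $i = 2,\ldots,6$ -- namely $0.v$, $(0.v)^2 + \tfrac{1}{99}(1-10^{-2|v|})$, $0.w$, $\tfrac{99}{105}(0.w)^2$, and $0.v \cdot 0.w$ -- is strictly positive whenever $v$ and $w$ are nonempty, which is immediate. So rows $2$ through $6$ of $B$ are entirely positive.

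For the first row, I would invoke part~1: since $B$ is already column-stochastic, $b_{1j} = 1 - \sum_{i \ge 2} b_{ij}$, and positivity of row~1 reduces to the six inequalities $\sum_{i \ge 2} b_{ij} < 1$. The assumption $v,w \in \{\mathtt{11}, \mathtt{12}\}^*$ gives the uniform bounds $0.v, 0.w \le 12/99 < 1/8$ and, since $v, w$ are nonempty, $10^{-|v|}, 10^{-|w|} \le 10^{-2}$. A crude estimate already shows $\sum_{i=2}^{6} a_{i1} < 1/3$, and the extra diagonal contributions $a_{ij}$ ($j \ge 2$) that appear in the remaining columns add only a further $O(10^{-2})$. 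Thus each column sum from rows $2$--$6$ stays comfortably below~$1$, so $b_{1j} > 0$.

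The only mildly delicate step is this last numerical bookkeeping for the first row, where one must use the explicit range of $0.v$, $0.w$ and the smallness of $10^{-|v|}$, $10^{-|w|}$; everything else is a direct read-off of the formulas for $V^{-1}$, $A(v,w)$, and the displayed form of $B$.
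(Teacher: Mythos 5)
Your proposal is correct and follows essentially the same route as the paper: part~1 is the identical chain $\mathbf{1}^TV^{-1}=e_1^T$, $e_1^TA=e_1^T$, $e_1^TV=\mathbf{1}^T$, and for part~2 you make the same split (rows $2$--$6$ are positive because the positive first column of $A(v,w)$ is added to every other column; only row~1 needs a numerical bound). Your bound $b_{1j}=1-\sum_{i\ge2}b_{ij}>0$ via $0.v,0.w<1/8$ is just a repackaging of the paper's direct estimate of $K_{12}$, with equivalent arithmetic.
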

\begin{proof}
  The first statement can be easily checked directly,
  but it has a systematic reason:
\begin{displaymath}
  (1,1,1,1,1,1)V^{-1}AV=
  (1,0,0,0,0,0)AV=
  (1,0,0,0,0,0)V=
  (1,1,1,1,1,1)
\end{displaymath}

The second statement is not needed for \autoref{thm-matrices-forward}, because positivity
is established anyway in \hyperref[sec:step4]{Step 4}, after destroying it in \hyperref[sec:step3]{Step~3}, but we
need it for \autoref{thm-matrices-backward}.
So let us check it: The only entries that are in danger of becoming negative
are the entries of the first row.
The two entries $a_{21}=0.v$ and  $a_{41}=0.w$ of the matrix $A(v,w)$ can be as large as
$0.121212\ldots \le 0.15$; all other entries are safely below $0.05$.
Thus, even the ``most dangerous'' candidate $K_{12}$, where 10 of the entries
$a_{ij}$ are subtracted from 1, cannot be zero or negative.

It can be checked that rows 2--6 are positive, because the
first column of $A(v,w)$, which is positive, has been added to every other column.
\end{proof}

In the transformation from $\pi_1$ to
$\pi_2 = \pi_1 V$, the first entry is added to all other entries.
Thus, the vector
$\pi_1=(\frac{1}{99}
,
0,-1,0,-1-\frac{6}{99},2)$, 
becomes 
$\pi_2=(\frac{1}{99},\frac{1}{99},-1+\frac{1}{99},\frac{1}{99},-1-\frac{5}{99},2+\frac{1}{99})$,
whose entries sum to zero:
\begin{equation}
  \label{eq:pi-sum-zero}
  \pi_2\left(\small 
  \begin{matrix}
    1\\1\\1\\1\\1\\1
  \end{matrix}\right)
=
  \pi_1V\left(\small
  \begin{matrix}
    1\\1\\1\\1\\1\\1
  \end{matrix}\right)
=
  \pi_1\left(\small
  \begin{matrix}
    6\\1\\1\\1\\1\\1
  \end{matrix}\right)
= 0
\end{equation}
It was for this reason that the entry
 $\frac{105}{99}$ of $\pi_1$ and the corresponding entries of
 the transformations \eqref{eq:transform-U} were chosen.
 
The output vector $f$ is unchanged by the transformation:
$f_2 =  V^{-1}f_1=f_1$.

\subsection{Step 3: Making row sums 1 with
  an extra state}
\label{sec:row-1}
\label{sec:step3}
By adding an extra column $r_i$, we now make all row sums equal to 1.
We also add an extra row, resulting in a $7\times7$ matrix
\begin{equation}
  C_i =
  \begin{pmatrix}
    B_i & r_i\\
    0 & 1
  \end{pmatrix}
\end{equation}
The entries of $r_i$ may be negative. They lie in the range
$-5\le r \le 1$.
All column sums are still 1: Since the row sums are 1, the total sum
of the entries is 7. Therefore, since the first six column sums are 1, the sum of
the last column must also be 1.

The vectors $\pi_2$ and $f_2$
are extended to vectors
$\pi_3$ and $f_3$
of length 7 by adding a 0, but
they are otherwise unchanged.  Thus, the extra column and row of $C_i$
plays no role for the value of the product~\eqref{eq:B}, which remains
unchanged:
\begin{displaymath}
  \pi_3C_{i_1}C_{i_2} \ldots C_{i_m}f_3
=    \pi_2B_{i_1}B_{i_2} \ldots B_{i_m}f_2
\end{displaymath}
\subsection{Step 4: Making the matrices positive, and hence
  stochastic}
\label{sec:step4}

Let $J$ be the doubly-stochastic $7\times 7$ transition matrix of
the ``completely random transition'' with all entries $1/7$.
Then, with 
$\alpha=0.01$,
we form the matrices $D_i := (1-\alpha)J + \alpha C_i$.
The constant $\alpha$ is small enough to ensure that $D_i>0$.
Hence the matrices $D_i$ are doubly-stochastic.

If expand of the product
$\pi_3\prod_{j=1}^m D_{i_j}=
\pi_3\prod_{j=1}^m \bigl((1-\alpha)J + \alpha C_{i_j}\bigr)$,
we get a sum of $2^m$ terms, each containing a
 product of $m$ of the matrices $J$ or $C_i$.
We find that all terms containing the factor $J$ vanish. 
The reason is that, 
since $C_i$ has row and column sums~1,
 $JC_i = C_iJ=J$. Moreover, $\pi_3 J=0$ by \eqref{eq:pi-sum-zero}.
It follows that
\begin{displaymath}
    \pi_3D_{i_1}D_{i_2} \ldots D_{i_m}f_3
=  \alpha^m \cdot \pi_3C_{i_1}C_{i_2} \ldots C_{i_m}f_3
\end{displaymath}
The factor $\alpha^m$ plays no role for the sign, and hence,
\begin{equation}
  \label{eq:D}
  \pi_3D_{i_1}D_{i_2} \ldots D_{i_m}f_3 >0  
\end{equation}
if and only if
$i_1i_2 \ldots i_m$ is a solution of the PCP.

\subsection{Step 5: Turning \texorpdfstring{$\pi$}{π} into a
  probability distribution}
\label{sec:step5}
The start vector $\pi_3$ has sum 0 and contains negative entries,
which
are smaller than $-1$ but not smaller than $-2$.
We form $\pi_4 = \bigl( (2,2,2,2,2,2,2)+\pi_3) \bigr)/14$.
It is positive and has sum~1.
The effect of substituting $\pi_3$ by $\pi_4$ in \eqref{eq:D} is that
the result is increased by $1/7$.
The reason is that
\begin{math}
  (1,1,1,1,1,1,1)
  D_{i_1}D_{i_2} \ldots D_{i_m} =   (1,1,1,1,1,1,1)
\end{math},
since the matrices $D_i$ are doubly-stochastic, and in particular,
column-stochastic,
and therefore 
\begin{math}
  (2,2,2,2,2,2,2)/14\cdot
  D_{i_1}D_{i_2} \ldots D_{i_m}f_3 = 2/14 = 1/7
\end{math}.

In summary, we have now constructed a true PFA with seven states that models the PCP:
\begin{equation}
  \label{eq:D2}
  \pi_4D_{i_1}D_{i_2} \ldots D_{i_m}f_3 >1/7
\end{equation}
if and only if
$i_1i_2 \ldots i_m$ is a (possibly empty) solution of the PCP.

\subsection{Using a small PCP}
\label{sec:select-PCP}

We base our proof on the undecidability of the PCP with 5 word pairs, as
established by Turlough Neary~\cite[Theorem~11]{neary:PCP5:2015} in 2015.
Neary constructed PCP instances with five pairs
$(v_1,w_1),\,(v_2,w_2),\,(v_3,w_3),\,(v_4,w_4),\,(v_5,w_5)$
that have the following property:
Every solution necessarily starts with the pair
$(v_1,w_1)$ and ends with
$(v_5,w_5)$.
We can also assume that
the end pair $(v_5,w_5)$ is used nowhere else.
(More precisely, in every \emph{primitive} solution (which is not a
concatenation of smaller solutions),
the end pair $(v_5,w_5)$ \emph{cannot} appear anywhere else than in
the final position.)
However, the
start pair
$(v_1,w_1)$ is also used in the middle of the solutions.
(This multipurpose usage of the
start pair is one of the devices to achieve such a remarkably small number of
word pairs.)\footnote
{The proof of Theorem~11
in Neary~\cite{neary:PCP5:2015}
contains an error, but this error can be fixed:
His PCP instances encode \emph{binary tag systems}.
 When showing that the PCP solution must
 follow the intended patters of the simulation of the binary tag system,
 Neary \cite[p.~660]{neary:PCP5:2015}
needs to show that the end pair $(v_5,w_5)=
(10^\beta1111,1111)$ cannot be used except to bring the two strings to
a common end. He
 claims that
 a block $1111$ cannot appear in the encoded string because in $u$
(the unencoded string of the binary tag system, which is described in Lemma~9)
 we
cannot have two
$c$ symbols
next to each other.
This is not true. The paper contains plenty of examples,
and they
contradict this claim; for example, the string $u'$ in (7)
\cite[p.~657]{neary:PCP5:2015} contains seven $c$'s in a row.
The mistake can be fixed by taking a longer block of 1s:
Looking at the appendants in Lemma~9, it is clear that every block of
length $|u|+1$ must contain a symbol $b$. Thus, the 
pair
$(v_5,w_5)=
(10^\beta1^{|u|+99},1^{|u|+99})$ will work as end pair.
}

\paragraph{Reversing the words.}
For our construction it is preferable to have the
word pair
$(v_5,w_5)$ at the beginning. We thus reverse all words.
Any solution sequence
$i_1i_2\ldots i_m$ for the original problem
must be also be reversed to
$i_mi_{m-1}\ldots i_1$, but this does not affect
the solvability of the PCP.
Thus, we work with PCP instances of the form
 $(v_1^R,w_1^R),
\allowbreak
(v_2^R,w_2^R), \ldots,
(v_5^R,w_5^R)$, with the following property:
Every
 solution sequence
 $i_1i_2\ldots i_m$
 must start with the pair
$(v_5^R,w_5^R)$,
and the pair
 $(v_5^R,w_5^R)$ cannot be used anywhere else:
 $i_1=5$ and $1\le i_j\le 4$ for $j=2,\ldots,m$.

\subsection{Step 6. Merging the leftmost matrix into the starting distribution} 
\label{sec:step6}
\label{sec:merge-left}
We apply the above construction steps to the word pairs $(v_1^R,w_1^R),
\allowbreak
\ldots,
(v_5^R,w_5^R)$, leading to five matrices $D_1,D_2,D_3,D_4,D_5$.
Since the leftmost matrix must be $D_5$ in any solution,
we can combine this 
matrix $D_5$ with the starting
distribution $\pi_4$: 
\begin{equation}
  \nonumber 
  \pi_4D_{5}D_{i_2} \ldots D_{i_m}f_3 =
  \pi_5D_{i_2} \ldots D_{i_m}f_3 ,
\end{equation}
leading to a new starting distribution
$  \pi_5 :=
\pi_4D_{5}$.
The matrix $D_5$ can be removed from the pool $\mathcal{M}$ of
matrices,
leaving only 4 matrices.
We have simultaneously eliminated the empty solution with a product of
$m=0$ matrices.
This concludes the proof of
\autoref{thm-matrices-forward}\ref{4-positive}.
\qed

\subsection{Proof of \autoref{thm-matrices-forward}\ref{5-fixed} (fixed
  transition matrices) by using a PCP with only one variable word pair}
\label{sec:Mati}

 Matiyasevich and S{\'e}nizergues \cite{matiyasevich2005decision}
 constructed PCP instances with seven word pairs
 $(v_1,w_1),\,\allowbreak(v_2,w_2),\,(v_3,w_3),\,(v_4,w_4),\,(v_5,w_5),
 \,(v_6,w_6),
 \,(v_7,w_7)$
that have the following property:
Every solution necessarily starts with the pair
$(v_1,w_1)$ and ends with
$(v_2,w_2)$.
Both the start pair $(v_1,w_1)$ and the end pair $(v_2,w_2)$ can
be assumed to appear nowhere else,
in the sense described in \autoref{sec:select-PCP}, i.\,e.,
apart from the possibility to concatenate solutions to obtain
longer solutions.
Matiyasevich and S{\'e}nizergues \cite
{matiyasevich2005decision}
used a reduction,
 due to
 Claus~\cite{\clausEATCS},
from
the \emph{individual accessibility problem} 
(or individual word problem)
for semi-Thue systems.
They showed that
the individual accessibility problem
is already undecidable for 
a \emph{particular} semi-Thue system with 3 rules~\cite[Theorem~3]{matiyasevich2005decision}.
Halava, Harju, and Hirvensalo
\cite[Theorem~6]{halava07-Claus-instances}
observed an important consequence of  this:
One can fix all words except $v_1$, and leave only
$v_1$ as an input to the problem, and the PCP is still undecidable.

From these word pairs,
we form the matrices
$A_i=A(v_i,w_i)$,
from the words
without reversal.
Following the same steps as above, we
eventually arrive at corresponding matrices $D_1,\ldots,D_7$.
 We merge
$D_1$ with $\pi_4$ into
a new starting distribution
$  \pi_5 :=
\pi_4D_{1}$.
We are left with a pool $\mathcal{M}=\{D_2,\ldots,D_7\}$ of
six fixed
matrices. The only variable input is the
starting distribution
$  \pi_5$.
This concludes the 
proof of \autoref{thm-matrices-forward}\ref{5-fixed}.
\qed

\section{Proofs of \autoref{thm-matrices-forward}\ref{2-matrices}
and \autoref{thm-matrices-forward}\ref{2-matrices-fixed}
  (binary input)}
\label{sec:two-matrices}

\subsection{Reduction to two matrices} 
\label{sec:reduce-to-2}

The following lemma and its proof is extracted from Step~3 in
Hirvensalo \cite[Section~3]{hirvensalo07}.
We denote by
$\phi(u)$ the acceptance probability of a (generalized) PFA for
an input $u$, i.e., the value of the product in the expression
\eqref{eq:accept}.

\begin{lemma}
\label{lem:reduce-to-2}
    Let $A=
(\pi,\mathcal{M},f)$
  be a generalized PFA with $k=|\Sigma|\ge3$ matrices $M_i$ of size
  $d\times d$, such that the first row of each matrix $M_i$ is
  $(1,0,\ldots,0)$.

  Then
    one can obtain
    a generalized PFA $A'=(\pi',\mathcal{M}',f')$
    over the two-symbol alphabet $\{\mathtt{a,b}\}$,
 with
    $2$ 
 matrices
 $M_{\mathtt{a}}'$ and
 $M_{\mathtt{b}}'$
 of dimension
 $(k-1)(d-1)+1$ such that
 for every word $u'
 \in  \{\mathtt{a,b}\}^*$ there exists a word $u\in\Sigma^*$ with
 \begin{equation}\label{eq:exists-a}
   \phi(u)=\phi'(u'),
 \end{equation}
and conversely, 
for every word
$u\in\Sigma^*$ there exists a word 
$u'
 \in  \{\mathtt{a,b}\}^*$ 
 with
 \eqref{eq:exists-a}.

 If the given matrices $M_i$ are nonnegative, then so are
$M_{\mathtt{a}}'$ and
$M_{\mathtt{b}}'$.
 If the given matrices $M_i$ are stochastic, then so are
$M_{\mathtt{a}}'$ and
$M_{\mathtt{b}}'$.
If $\pi$ is a probability distribution, then so is $\pi'$.
The entries of $f'$ are taken from the entries of $f$.

Thus, if $A$ is a PFA, then $B$ is a PFA as well.
 
\end{lemma}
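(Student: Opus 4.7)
The plan is to encode each input symbol $\sigma_i\in\Sigma=\{\sigma_1,\ldots,\sigma_k\}$ as a binary codeword $c_i\in\{\mathtt{a},\mathtt{b}\}^*$ and to design $M'_{\mathtt{a}}$, $M'_{\mathtt{b}}$ so that reading $c_i$ simulates one application of $M_i$. I would use the prefix code
\[
 c_i = \mathtt{b}^{i-1}\mathtt{a}\quad(i=1,\ldots,k-1),\qquad c_k=\mathtt{b}^{k-1},
\]
and take the state set
\[
 Q'=\{1\}\cup\{(s,j):s\in\{2,\ldots,d\},\ j\in\{0,1,\ldots,k-2\}\},
\]
which has exactly $(k-1)(d-1)+1$ elements, as required. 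State~$1$ of the new automaton will inherit its absorbing character from the hypothesis that every $M_i$ has first row $(1,0,\ldots,0)$.

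\textbf{Construction.} Set $\pi'_1=\pi_1$, $\pi'_{(s,0)}=\pi_s$ for $s\ge 2$, and $0$ on the remaining coordinates; this is a probability distribution whenever $\pi$ is. Set $f'_1=f_1$ and $f'_{(s,j)}=f_s$ for $s\ge 2$ and all $j$, so every entry of $f'$ is an entry of $f$. The matrix $M'_{\mathtt{b}}$ keeps state~$1$ fixed, sends $(s,j)\mapsto(s,j{+}1)$ deterministically for $j<k-2$, and at the top of the counter applies $M_k$: it routes the mass at $(s,k-2)$ to state~$1$ with probability $(M_k)_{s,1}$ and to $(s',0)$ with probability $(M_k)_{s,s'}$ for $s'\ge 2$. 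The matrix $M'_{\mathtt{a}}$ keeps state~$1$ fixed and, at each $(s,j)$, applies $M_{j+1}$ in the same way, sending mass to~$1$ with probability $(M_{j+1})_{s,1}$ and to $(s',0)$ with probability $(M_{j+1})_{s,s'}$. Because each row of every $M_i$ sums to~$1$, so does each row of $M'_{\mathtt{a}}$ and $M'_{\mathtt{b}}$, and nonnegativity is inherited directly.

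\textbf{Correspondence.} The key fact to check is that, for every $i$, the product of letter-matrices spelling $c_i$ reproduces on $Q'$ the action of $M_i$ on $\{1,\ldots,d\}$: starting at $(s,0)$ the mass ends up at state~$1$ with probability $(M_i)_{s,1}$ and at $(s',0)$ with probability $(M_i)_{s,s'}$, while state~$1$ is mapped to itself. Iterating this, and using that state~$1$ is absorbing in the original PFA so that the contribution $\pi_1 f_1$ matches on both sides, I would obtain
\[
 \pi'\,M'_{c_{i_1}}\cdots M'_{c_{i_m}}\,f'\;=\;\pi\, M_{i_1}\cdots M_{i_m}\,f,
\]
which gives the forward half of \eqref{eq:exists-a} by setting $u'=c_{i_1}\cdots c_{i_m}$ and $u=\sigma_{i_1}\cdots\sigma_{i_m}$. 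For the backward half I would parse an arbitrary $u'\in\{\mathtt{a},\mathtt{b}\}^*$ greedily: every maximal $\mathtt{b}$-run of length $\ge k-1$ begins with a complete $c_k$, and every shorter $\mathtt{b}$-run followed by an $\mathtt{a}$ forms a unique $c_i$ with $i<k$, so $u'$ decomposes as $c_{i_1}c_{i_2}\cdots c_{i_m}\mathtt{b}^r$ with $0\le r\le k-2$. Since $f'_{(s,j)}=f_s$ does not depend on $j$ and state~$1$ is absorbing, the trailing $\mathtt{b}^r$ leaves $\phi'(u')$ unchanged, so $u=\sigma_{i_1}\cdots\sigma_{i_m}$ satisfies $\phi(u)=\phi'(u')$.

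\textbf{Main obstacle.} The only delicate point is the backward direction: every binary string---including those ending with an incomplete $\mathtt{b}$-block---must decode unambiguously to some word over $\Sigma$. This hinges on the two design choices of keeping state~$1$ absorbing under both $M'_{\mathtt{a}}$ and $M'_{\mathtt{b}}$ (inherited from the hypothesis on the $M_i$) and of making $f'$ constant along each counter fibre $\{(s,0),\ldots,(s,k-2)\}$, so that a trailing incomplete codeword cannot spoil the equality. Everything else reduces to a mechanical stochasticity check and a row-by-row verification that $M'_{c_i}$ simulates $M_i$.
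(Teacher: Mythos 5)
Your proposal is correct and follows essentially the same route as the paper's proof: the same counter-augmented state set $\{1\}\cup\{2,\ldots,d\}\times\{0,\ldots,k-2\}$ exploiting the absorbing first state, the same prefix-free code (with the roles of $\mathtt{a}$ and $\mathtt{b}$ merely interchanged), and crucially the same choice of making $f'$ constant along each counter fibre so that trailing incomplete codewords do not change the value --- which is exactly the technicality the paper flags as the point where Hirvensalo's original argument went wrong. The only cosmetic difference is that you phrase the transitions probabilistically while the paper also records them as explicit block matrices to cover the generalized (non-stochastic) case; since your entries are given by formulas in the $(M_i)_{s,s'}$, they remain well defined there too.
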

The lemma is not specific about the word $u$ or $u'$ whose existence is
guaranteed.
Nevertheless, regardless of how these words are found,
the statement
is sufficient in the context of the emptiness
question.

However,
we can be explicit about $u$ and $u'$:
The construction uses a binary encoding
  $\tau\colon\Sigma^*\to
  \{\mathtt{a,b}\}^*
  $ with the prefix-free codewords
  $\{
 \mathtt{b,ab},\mathtt{aab},\allowbreak
 \ldots,\allowbreak\mathtt{a}^{k-2}\mathtt{b}
,\allowbreak\mathtt{a}^{k-1}
\}$.
Then, we can take $u'=\tau(u)$, because
 \begin{displaymath}
   \phi(u)=\phi'(\tau(u)).
 \end{displaymath}
 For words $u'$ that are not of the form $\tau(u)$,
 \eqref{eq:exists-a} holds for the longest word $u$ such that
  $\tau(u)$ is a prefix of $u'$. In other words, $u$ is the decoding
  of the longest decodable prefix of~$u'$.

\begin{proof}
  The procedure is easiest to describe if we assume that $A$ is a PFA.
  Then we can think of $A'$ as an automaton that decodes the input
 $u'\in\{\mathtt{a},\mathtt{b}\}^*$ and carries out a transition of~$A$ whenever it has
  read a complete codeword.  In addition to the state
  $q\in\{1,\ldots,d\}$ of $A$, $A'$ needs to maintain a counter $i$ in
  the range $0\le i \le k-2$ for the number of \texttt{a}'s of the
  current partial codeword.  Whenever $A'$ reads a \texttt{b}, or if it has read the
  $(k-1)$-st \texttt{a}, $A'$ performs the appropriate random
  transition and resets the counter.  The number of states of $A'$ is
  $d\times(k-1)$.
    
  The fact that state 1 is an absorbing state allows a shortcut:
  If we are in state 1, we can stop to maintain the counter $i$.
  Thus, only states $2,\ldots,d$ need to be multiplied by $k-1$, and
  the resulting number of states reduces to $(d-1)(k-1)+1$.

  We will now describe the construction of $A'$ in terms of transition
  matrices.
  This construction is valid also when $A$ is a generalized PFA, where
  the above description in terms of random transitions makes no sense. 
  To make the description more concrete, we illustrate it with
  $k=5$ matrices $M_1,\ldots,M_5$
and the corresponding binary  codewords
\texttt{b}, \texttt{ab}, \texttt{aab}, \texttt{aaab}, \texttt{aaaa}.

We split the $d\times d$ matrices $M_i$ and the vectors $\pi$ and $f$ into
blocks of size $1+(d-1)$:
\begin{displaymath}
  M_i =
  \begin{pmatrix}
    1&0\\
c_i&    \hat C_i \\
  \end{pmatrix}
,\ 
  \pi =
  \begin{pmatrix}
p_1\\    \hat \pi\\
\end{pmatrix}^{\!\!T}
,\
  \eeta =
  \begin{pmatrix}
    f_1\\
    \hat \eeta\\
  \end{pmatrix}
.
\end{displaymath}
We define new transition matrices and start and end vectors
in block form
with block sizes $1+(d-1)+(d-1)+(d-1)+(d-1)$
as follows:
\begin{displaymath}
  M'_{\mathtt{a}}=
  \begin{pmatrix}
  1&0&0&0&0\\
  0& 0 & I & 0&0 \\
  0& 0 & 0 & I&0  \\
 0& 0& 0 & 0 & I  \\
c_5 &  \hat C_5  &  0 &  0 & 0 \\
\end{pmatrix}
,\ 
M'_\mathtt{b}=
  \begin{pmatrix}
  1 & 0 & 0 & 0& 0\\
c_1&    \hat C_1 & 0 & 0& 0\\
c_2&    \hat C_2 & 0 & 0& 0\\
c_3&    \hat C_3 & 0 & 0& 0\\
c_4&    \hat C_4 & 0 & 0& 0\\
    \end{pmatrix}
,\ 
\pi' =
\begin{pmatrix}
  p_1\\
 \hat \pi_2\\0\\0\\0
\end{pmatrix} ^{\!\!T}
\!\!\text{, and }
f' =
\begin{pmatrix}
f_1\\\hat f\\\hat f\\\hat f\\\hat f
\end{pmatrix}
.
\end{displaymath}
From the sequence of powers of $M'_{\mathtt{a}}$,
\begin{displaymath}
  (M'_{\mathtt{a}})^2
=
  \begin{pmatrix}
1&  0&0&0&0\\
  0 & 0 & 0 & I &0\\
0&  0 & 0 & 0 & I \\
c_5&  \hat C_5  &  0 & 0 &0\\
c_5& 0&  \hat C_5  & 0 & 0\\
\end{pmatrix},
\  (M'_{\mathtt{a}})^3
=
  \begin{pmatrix}
1&  0&0&0&0\\
0&  0 & 0 & 0 & I \\
c_5&  \hat C_5  &  0 & 0 &0\\
c_5& 0&  \hat C_5  & 0 & 0\\
c_5& 0& 0& \hat C_5  & 0\\
\end{pmatrix},
\end{displaymath}
we can recognize the pattern of development. 
We can then work out the matrices
$M'_{\mathtt{b}}$,
$M'_{\mathtt{a}}
M'_{\mathtt{b}}$,
$M'_{\mathtt{a}}
M'_{\mathtt{a}}
M'_{\mathtt{b}}$,
$M'_{\mathtt{a}}
M'_{\mathtt{a}}
M'_{\mathtt{a}}
M'_{\mathtt{b}}$, and
$M'_{\mathtt{a}}
M'_{\mathtt{a}}
M'_{\mathtt{a}}
M'_{\mathtt{a}}$
and check that they
are of the form
\begin{displaymath}
  \begin{pmatrix}
1&    0 & 0 & 0 &0    \\
 c_i&    \hat C_i & 0 & 0 &0\\
    *&* & * & * & *\\
    *&* & * & * & *\\
    *&* & * & * & *\\
  \end{pmatrix}
\end{displaymath}
for $i=1,2,3,4,5$,
having the original matrices $M_i$ in their upper-left corner
and otherwise zeros in the first two rows of blocks.
 Thus, they simulate the original automaton on the states
$1,\ldots,d$:
It is easy to establish by induction that
multiplying the initial distribution vector ${\pi'}$ with a sequence of such matrices will produce a
vector $x'$ of the form
\begin{equation}
  \label{eq:result-form}
x' = (x_1\ \hat x \ 0\,\ 0\,\ 0)
\end{equation}
whose first $d$ entries
$x=(x_1\ \hat x)$ 
coincide with the entries of
the corresponding vector produced with the original start vector
$\pi$ and
the original
matrices~$M_i$.
If $x'$ is multiplied with $f'$, the result is the same as with
$x$ and the original vector~$f$.

One technicality remains to be discussed: Some  ``unfinished'' words in
$\{\mathtt{a},\mathtt{b}\}^*$  do not factor into codewords but
end in a partial
codeword \texttt{a}, \texttt{aa}, or \texttt{aaa}.
To analyze the corresponding matrix products, we look at the powers
$(M'_{\mathtt{a}})^i$, for $i=1,2,3$.
They have the form
\begin{equation}\label{eq:form-MaMa}
  \begin{pmatrix}
  1&  0 & 0 & 0 & 0\\
   0 & 0 & I & 0& 0\\
    * & * & * & *& *\\
    * & * & * & *& *\\
    * & * & * & *& *\\
  \end{pmatrix},\
  \begin{pmatrix}
  1&  0 & 0 & 0 & 0\\
   0 & 0 & 0& I & 0\\
    * & * & * & *& *\\
    * & * & * & *& *\\
    * & * & * & *& *\\
  \end{pmatrix},\
  \text{and }
  \begin{pmatrix}
  1&  0 & 0 & 0 & 0\\
   0 & 0 & 0& 0 & I\\
    * & * & * & *& *\\
    * & * & * & *& *\\
    * & * & * & *& *\\
  \end{pmatrix}.  
\end{equation}
and therefore, multiplying
 the vector $x$ from \eqref{eq:result-form} with them
yields
$ (x_1\ 0 \ \hat x\ 0\ 0)$,
$ (x_1\ 0\ 0 \ \hat x\  0)$, and
$ (x_1\ 0 \ 0 \ 0\ \hat x)$,
respectively.
If this is multiplied with $f'$, the result is the same as with
 the vector $x'$ in \eqref{eq:result-form}. 
Thus, as claimed after the statement of the lemma, {input sequences whose decoding process leaves
  a partial codeword $\mathtt{a}^i$ for $1\le i \le k-2$
  produce the same value
as if that partial codeword were omitted}.\footnote
 {Hirvensalo
  \cite
  {hirvensalo07} erroneously claimed that such
  incomplete inputs 
  give the value~0.
He mistakenly defined the vector $f'$ ($\mathbf{y}_3$ in his notation)
``analogously''
to the vector $\pi'$ ($\mathbf{x}_3$ in his notation),
thus padding it with zeros.
With this
definition, the result for incomplete inputs cannot be controlled at all.
}
  \end{proof}

\subsection{Proof of \autoref{thm-matrices-forward}\ref{2-matrices} }
\autoref{lem:reduce-to-2} blows up the number of states by a factor
that
depends on the number of matrices. Thus, it is advantageous to apply
it \emph{after} merging the start matrix into $\pi$, when the number of
matrices is reduced.

So we start with
the 5-pair instances of Neary~\cite{neary:PCP5:2015}
and construct five matrices
$A_i=A(v_i^R,w_i^R)$ from the reversed pairs.
We then
combine $A_5$ with the starting
distribution $\pi_1$ into $\pi_2 = \pi_1A_5$,
leaving $k=4$ matrices $A_1,A_2,A_3,A_4$ of
 dimension $d=6$. 

 We cannot apply the transformations of \hyperref[sec:step2]{Step~2},
because it changes the first row, and state~1 would no longer be
absorbing, which precludes the application of \autoref{lem:reduce-to-2}.
 
Thus, we apply
\autoref{lem:reduce-to-2} to
the matrices $A_1,A_2,A_3,A_4$,
resulting in two matrices
$M_{\mathtt{a}}'$ and
 $M_{\mathtt{b}}'$
 of dimension
 $(k-1)(d-1)+1=16$.
 The new start vector $\pi_3$ is $\pi_2$ padded with zeros,
 and the end vector $f$ is still the first unit vector (of dimension 16).

We would now like to use the transformation of \hyperref[sec:step2]{Step~2} to make
the matrices column-stochastic.
However, since we have replaced the initial
start vector $\pi_1$ by $\pi_2$,
the entries of the start vector
after the transformation 
would not longer sum to~0, a property that is
crucial for eventually making the matrices stochastic in \hyperref[sec:step4]{Step~4}.

Therefore we have to achieve column sums 1
in a different way, with
an adaptation of the method of \hyperref[sec:step3]
{Step~3} (\autoref{sec:row-1}),
adding an extra state;
 
\subsection{Step 2\texorpdfstring{$'$}{'}: Making column sums 1 with
  an extra state}
\label{sec:step2'}
We add an extra row $s_i$
and an extra column to each matrix, ensuring that all
column sums become 1.
\begin{equation}
  B_1=
  \begin{pmatrix}
    M_{\mathtt{a}} & 0\\
    s_1
    & 1\\
  \end{pmatrix},
  \
  B_{2} =
  \begin{pmatrix}
    M_{\mathtt{b}} & 0\\
    s_2
    & 1\\
  \end{pmatrix}
\end{equation}
We now have two $17\times 17$ matrices $B_1,B_2$ with column sums
1. 
The remaining steps (\hyperref[sec:step3]{Steps~3}--\hyperref[sec:step5]5) are as before, with the appropriate
adaptations, and they add another row and column. (We may have to choose a smaller constant $\alpha$ in
\hyperref[sec:step4]{Step~4}.)
This concludes the
proof of \autoref{thm-matrices-forward}\ref{2-matrices}. \qed

We mention that the combined effect of
\hyperref[sec:step2']{Steps~2$'$}--\hyperref[sec:step5]5 can also be achieved by applying
the sharpened version of Turakainen's 
Theorem~\cite[Theorem~1(i)] {Turakainen_1975} from 1975,
which converts any generalized PFA into a 
PFA with just 2 more states.
(This method could already have been used by
Blondel and Canterini
in 2003 and by Hirvensalo in 2007, who used 4 extra states to achieve the same effect, see
  \cite[Steps 3 and~4]{blondel-canterini-03}
and  
\cite[Steps 4 and~5]{hirvensalo07}.) 
Since we had almost all tools available, we have chosen to
describe the conversion directly.
In fact, our procedure more or less parallels Turakainen's proof,
except that we do not have to treat the vector $f$ because it is
already a 0-1-vector
(Step~1 of~\cite{Turakainen_1975}).
\hyperref[sec:step2']{Steps~2$'$} and~\hyperref[sec:step3]3 are usually treated together as one step.
Our \hyperref[sec:step4]{Step~4}, the conversion to positive matrices, is
 more streamlined than in other proofs in the
literature.\footnote{We take this occasion to point
  out a slight mistake in the statement and proof of 
\cite[Theorem~1(i)]{Turakainen_1975}.
The cutpoint is not $1/(n+2)$ but it must be modified by
the quantity~$d$ from 
\cite[top of p.~30]{Turakainen_1975}.
For the case $\pi f>0$, the original statement
of \cite[Theorem~1(i)]{Turakainen_1975}
can still be recovered from
the stronger statement of
\cite[Theorem~1(iv)]{Turakainen_1975} by rescaling.}

\subsection{Proof of
  \autoref{thm-matrices-forward}\ref{2-matrices-fixed}  (two fixed matrices)}

This is obtained by adapting the Proof of
\autoref{thm-matrices-forward}\ref{2-matrices}
in the same way as for
\autoref{thm-matrices-forward}\ref{5-fixed}
(\autoref{sec:Mati}).
Instead of Neary's 5-pair instances,
we take the instance
$(v_1,w_1),\,\ldots, \,(v_7,w_7)$
of
 Matiyasevich and S{\'e}nizergues 
and construct seven matrices
$A_i=A(v_i,w_i)$ from these pairs (without reversing the words
$v_i$ and~$w_i$).
We then
combine $A_1$, which must be the first matrix in the product, with the starting
distribution $\pi_1$ into $\pi_2 = \pi_1A_1$.
The remaining matrices $A_2,A_3,\ldots,A_7$ are fixed. Only $\pi_2$
is an input to the problem.

The remainder of the proof is the same as in \autoref{thm-matrices-forward}\ref{2-matrices}.
 We apply
\autoref{lem:reduce-to-2} to
$k=6$ matrices $A_i$ of dimension $d=6$, resulting in two fixed matrices
$M_{\mathtt{a}}'$ and
 $M_{\mathtt{b}}'$
 of dimension
 $(k-1)(d-1)+1=26$.
The conversion to a PFA requires two more states, and this leads to
\autoref{thm-matrices-forward}\ref{2-matrices-fixed}.\footnote
{We mention that \emph{one} of the two matrices in 
  \autoref{thm-matrices-forward}\ref{2-matrices} can also be held fixed:
Neary's PCP contains one fixed word pair
$(v_4,w_4)=(\mathtt{1},\mathtt{0})$
\cite[Theorem~11]{neary:PCP5:2015}.
We see that the matrix $M'_{\mathtt{a}}$ in the construction of
\autoref{lem:reduce-to-2} contains only one of the matrices
$M_i$ of the  original $k$-state automaton. (In the example shown, it is
the matrix~$M_4$.) If we arrange that this is the matrix
constructed from the fixed pair
$(v_4,w_4)$, then  $M'_{\mathtt{a}}$, and hence the corresponding
stochastic $18\times 18$
matrix in the final PFA, will be fixed.}
\qed

 \section{Proof of \autoref{thm-matrices-backward} 
   (variable end 
   vector \texorpdfstring{${f}$}{f})}

One can relax the requirement that $f$ is a 0-1-vector and
allow arbitrary values.
If the values are in the interval $[0,1]$
we can think of the entry $\eeta_i$ as a probability in a final acceptance
decision, if the PFA is in state $i$
after the input has been read.
Another possibility is that
 $\eeta_i$ represents
 a \emph{prize} or \emph{reward} that
that is gained when the process stops
in state~$i$.
Then
 $\eeta_i$
does not need to be restricted to the interval $[0,1]$.
In this view, instead of
the acceptance probability, we compute the \emph{expected} reward
of the automaton, as in game theory.
We call $\eeta_q$
the \emph{output values},
and $\eeta$ and
the \emph{output vector} or
the \emph{end vector}, in analogy to the start vector $\pi$.

Since the transition matrices $B_i$ after
\stepref 2 are column-stochastic and
positive by \autoref{lem:column-stochastic}, we transpose
them to produce stochastic matrices, which can be directly used as transition
matrices.
This will reverse the order of the matrices in the product
and swap the start vector with the end vector:
\begin{align}\nonumber
  \pi_2B_{i_1}B_{i_2} \ldots B_{i_m}f_2
&=
    f^T_2B^T_{i_m}B_{i_{m-1}}^T \ldots B_{i_1}^T\pi_2^T 
\\&=
  \pi_6B^T_{i_m}B_{i_{m-1}}^T \ldots B_{i_1}^Tf_6
  \label{eq:reversed-product}
\end{align}
We may prefer to have the end vector $f$ in the interval $[0,1]$.
Thus,
we replace $f_6=\pi_2^T $
by $f_7 = \bigl( (2,2,2,2,2,2)+\pi_2) \bigr)^T/12$,
in analogy to
\autoref{sec:step5}.
This vector is positive and has sum~1,
and in particular, the entries lie between 0 and 1.
The effect is to increase
the value of \eqref{eq:reversed-product} by 1/6.

 We take
 the 5-pair instances of Neary~\cite{neary:PCP5:2015}
(\autoref {sec:select-PCP})
and construct five matrices
$A_i=A(v_i^R,w_i^R)$ from the reversed pairs.
The words $v_i$ and $w_i$ in these instances are nonempty,
as required by
\autoref{lem:column-stochastic}
\cite[Proof of Theorem~11]{neary:PCP5:2015}.
We convert them to column-stochastic matrices $B_i$
and use the transposed matrices $B_i^T$. We know
that
$(v_5^R,w_5^R)$ must be used at the beginning of the
PCP solution $i_1i_2\ldots i_m$ ($i_1=5$)
and nowhere else.
Hence
$B_5^T$
must be used at the end of the product in
\eqref
{eq:reversed-product},
and we can merge it with $f_6$
into an end vector $f_8 = B_5^Tf_7 = B_5^T\pi_2^T$.
The four remaining matrices
$B_1^T,B_2^T,B_3^T,B_4^T$ form the set $\mathcal{M}$.
The starting distribution $\pi_6 = f_2^T$ is a unit vector,
i.e., there is a single deterministic start state.
This proves 
\autoref{thm-matrices-backward}\ref{backward}
for $\lambda=1/6$.

\subsection{Changing the cutpoint
\texorpdfstring{$\lambda$}{λ}
  by manipulating the end vector
 \texorpdfstring{${f}$}{f}}
\label{sec:manipulate}
When the end vector $f$ of a PFA is under control, one can change the
cutpoint $\lambda$ to any desired value
in the open interval between $0$ and $1$:
Adding a constant $K$ to all output values $f_i$ will increase
the expected output value by $K$, and scaling by a positive constant
will affect the expected output value in the same way.

Thus, by changing all $f_i$ to $\alpha f_i$ for $0<\alpha<1$, one may
decrease $\lambda$ to any positive value.
 By changing all $f_i$ to $1-\alpha +\alpha f_i$ for $0<\alpha<1$, one may
 increase $\lambda$ to any value less than~1.
If $f$ is not constrained to the interval $[0,1]$, one can reach any
real value~$\lambda$.

 \subsection{Proof of
   \autoref{thm-matrices-backward}\ref{backward-fixed} (fixed
  transition matrices)}
\label{sec:back-fixed}


We would like to apply the approach of
\autoref{thm-matrices-backward}\ref{backward}
to the the seven word pairs
 $(v_1,w_1),\,\allowbreak\ldots, 
 \,(v_7,w_7)$
from
\autoref{sec:Mati}.
They should fulfill the assumption of
\autoref{lem:column-stochastic} that none of the words is empty.
However, one of the rules of the
3-rule semi-Thue system of
Matiyasevich and S{\'e}nizergues,
the rule $x\bar x\longrightarrow \epsilon$,
contains
an empty word,
see \cite[System~$S_5$, (23)]{matiyasevich2005decision}.
 The reduction of 
Claus
\cite{\clausEATCS}\ 
(see also
\cite[Theorem~4]{halava07-Claus-instances})
would translate this into a PCP pair with an empty word.

Luckily, this reduction can be patched
to yield seven pairs of nonempty words $v_i$ and $w_i$.
We refer to
\autoref{sec:semi-Thue} for details.

 As above, we form from these pairs the stochastic matrices
$B_1^T,\ldots,B_7^T$.
Only $B_1^T$ is a variable matrix, and all other matrices are fixed.
In the product
\begin{math}
 \pi_6B^T_{i_m}B_{i_{m-1}}^T \ldots B_{i_1}^Tf_7
\end{math},
we merge the first matrix with $\pi_6$ into
$\pi_8 := \pi_6B^T_{i_m} = \pi_6B^T_2$, which becomes the fixed start
vector,
and the last matrix 
into
$f_8 := B_{i_1}^Tf_7 = B_{1}^Tf_7$, which becomes the only input to the
problem.
The remaining five matrices
$B_3^T,B_4^T,B_5^T,B_6^T,B_7^T
$ form the fixed set $\mathcal{M}$.
This proves 
\autoref{thm-matrices-backward}\ref{backward-fixed}
for $\lambda=1/6$, and as above, it can be
changed to any cutpoint $\lambda$.
\qed

\section{Outlook}
\label{sec:outlook}
The natural question is to ask for the smallest number of states for which
PFA Emptiness is undecidable.
Even if we consider generalized PFAs (rational-weighted automata),
we could not reduce this number below~6.
Claus~\cite[Theorem 7 and Corollary, p.~155]{claus81} showed that
the emptiness question can be decided for
PFAs with two states.

A matrix dimension of six seems to be the minimum required in order to
carry enough information
to model concatenation of word
pairs and allow testing for equality by a quadratic expression like~\eqref{eq:almost-zero},
even in weighted
automata. Undecidability for five or perhaps even three states would
require some completely different (perhaps geometric) approach.

It would be an interesting exercise to trace back the
undecidability proof of
Matiyasevich and S{\'e}nizergues
\cite{matiyasevich2005decision}
to its origins
and explicitly work out the fixed matrices of
\autoref{thm-matrices-forward}\ref{5-fixed} or
\ref{thm-matrices-backward}\ref{backward-fixed}.
\ifanonymous
\else
For one of the weaker results mentioned after
\autoref{thm-matrices-backward},
\cite[Theorem~4b]{r-pfaeu-24},
one particular matrix from a set of 52 fixed $11\times 11$ matrices is
shown in 
\cite[Section~7.3]{r-pfaeu-24}. 

\fi

PFAs can be have other merits than just a small number of states
and input symbols. We discuss some of these criteria.

\paragraph{Positive and doubly-stochastic transition matrices.}
In our results,
the
 transition matrices can always be assumed to be
 strictly positive and 
 sometimes also
 doubly-stochastic.  Whenever this is the case,
 we have mentioned it in
 Theorems~\ref{thm-matrices-forward} and~\ref{thm-matrices-backward}.

\subsection{Obtaining a substantial gap between acceptance and rejection}
\label{sec:amplification}

As seen in formula
\eqref{eq:almost-zero}, the acceptance probability barely rises
above the threshold $0$ when the input represents a solution of the
PCP.
(This tiny gap is further reduced by multiplying it with
 $\alpha^m$ in
 \hyperref[sec:step4]{Step~4}.)
Thus, our constructions 
depend on the capability of a PFA to ``detect'' minute fluctuations
of the acceptance probability above the cutpoint.
This statement applies to all undecidability proofs in the Nasu--Honda
line of descent.

By contrast,
the proof of 
Condon and Lipton
\cite{condon-lipton-1989} from 1989
gives a more robust result\ifanonymous\else, see also 
\cite[Section~4]{r-pfaeu-24}\fi
:
 For any $\eps>0$, it yields a PFA
such that the largest acceptance probability is either $\ge 1-\eps$ or
$\le \eps$, and
the problem to detect which of the two cases holds is undecidable.
 Undecidability is derived from the halting problem for 2-counter
machines,
and the number of states of the PFA
is beyond control.

Luckily, our results can be strengthened to
 even surpass the strong separation achieved in the
Condon--Lipton construction, by the following
gap amplification technique of
Gimbert and Oualhadj, which
we formulate in slightly generalized form.

\begin{restatable}
[Gimbert and Oualhadj 2010
\cite{gimbert-oualhadj-2010:PFA}]
  {theorem}
  {GapAmplification}
  \label{thm-amplify}
  We are given a PFA $A$ with input alphabet $\Sigma$ and $d$ states,
  with an output vector $f\in [0,1]^d$.
  We denote its
acceptance probability for an input $u\in \Sigma^*$ 
by $\phi(u)$.
Let $\lambda_A,\lambda_B$ be two thresholds with
$0\le\lambda_A \le1$ and
$0<\lambda_B <1$.

Then, from the description of $A$, we can construct a PFA $B$ with input alphabet
$\Sigma\cup\{\mathtt{end},\mathtt{check}\}$
with $2d+3$ states, a single
start state and a single accepting state,
with the following
property.
\begin{enumerate}[1.]
\item
    If every input $u\in \Sigma^*$ for $A$ has acceptance probability $\phi(u)\le\lambda_A$,
then every input for $B$ is accepted by $B$ with probability $\le\lambda_B$.
\item If $A$ has an input $u\in \Sigma^*$ with acceptance probability $\phi(u)>\lambda_A$,
  then $B$ has inputs with acceptance probability arbitrarily close to~$1$.
\end{enumerate}
\end{restatable}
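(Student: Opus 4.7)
The plan is to construct $B$ by adjoining three control states---a start state $s_0$, an accepting sink $\top$, and a rejecting sink $\bot$---to two parallel copies $A^+$ and $A^-$ of $A$'s state space, yielding $2d+3$ states in total. Letters of $\Sigma$ drive $A^+$ and $A^-$ by $A$'s own transition matrices while leaving the three control states fixed. The new letter $\mathtt{end}$ performs two simultaneous actions: it launches a fresh parallel run by splitting $s_0$ into the start state of $A^+$ with probability $\alpha$ and the start state of $A^-$ with probability $1-\alpha$ (for a parameter $\alpha\in(0,1)$ to be chosen below); and it commits the current parallel runs through the output vector, sending each state $q\in A^+$ to $\top$ with probability $f_q$ or back to $s_0$ with probability $1-f_q$, and each state $q\in A^-$ to $\bot$ with probability $1-f_q$ or back to $s_0$ with probability $f_q$. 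Finally, $\mathtt{check}$ sweeps every non-$\top$ state into $\bot$ so that only $\top$ contributes to acceptance.

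I would then analyze $B$ on canonical inputs of the form $u_1\,\mathtt{end}\,u_2\,\mathtt{end}\cdots u_k\,\mathtt{end}\,\mathtt{check}$ via a geometric-series recursion: in round~$i$, an $\alpha\phi(u_i)$ fraction of the current $s_0$-mass commits to $\top$, a $(1-\alpha)(1-\phi(u_i))$ fraction commits to $\bot$, and the remainder returns to $s_0$. Summing with $\phi^\ast=\max_i\phi(u_i)$, the final acceptance probability is bounded above by $\alpha\phi^\ast/(\alpha\phi^\ast+(1-\alpha)(1-\phi^\ast))$ and approaches this bound as $k\to\infty$. Choosing $\alpha$ so that $\alpha\lambda_A/(\alpha\lambda_A+(1-\alpha)(1-\lambda_A))=\lambda_B$ yields Case~1 of the theorem: if every input of $A$ satisfies $\phi(u)\le\lambda_A$, then every input of $B$ is accepted with probability at most~$\lambda_B$. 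In Case~2, the input $(u^\ast\,\mathtt{end})^k\,\mathtt{check}$ saturates the same geometric series to a limit strictly above~$\lambda_B$.

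The main obstacle is to upgrade Case~2 from ``strictly exceeds $\lambda_B$'' to ``arbitrarily close to $1$''. A single pass of the amplifier only drives the acceptance to the fixed ratio $\alpha\phi(u^\ast)/(\alpha\phi(u^\ast)+(1-\alpha)(1-\phi(u^\ast)))$, which lies strictly below~$1$ for any $\phi(u^\ast)<1$ and fixed $\alpha$. The resolution is to bootstrap the amplifier through the input itself: one feeds $B$ a nested sequence of $u^\ast$-blocks that effectively chains the amplifier onto itself within $B$'s state machine, progressively driving the effective $\phi$-value towards~$1$. The delicate technical step is verifying that this chaining can be packaged into an input word of $B$ without violating the $2d+3$ state bound, and this is where the parallel $A^+,A^-$ structure (rather than a single copy) becomes essential: the $A^-$-branch supplies an explicit ``certificate of rejection'' that quantitatively controls the contribution of low-$\phi$ rounds and prevents the iteration from diverging.
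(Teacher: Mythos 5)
There is a genuine gap, and it is exactly at the point you flag as the ``main obstacle'': your construction cannot produce acceptance probabilities arbitrarily close to $1$, and the proposed ``bootstrapping'' does not repair it. Your own analysis shows that on \emph{every} input of $B$ the accepting mass accumulated in $\top$ is bounded above by
$\alpha\phi^\ast/\bigl(\alpha\phi^\ast+(1-\alpha)(1-\phi^\ast)\bigr)$
with $\phi^\ast=\sup_u\phi(u)$; since the transition matrices of $B$ are fixed once $\alpha$ is chosen, this bound applies uniformly to all words over $\Sigma\cup\{\mathtt{end},\mathtt{check}\}$, including any ``nested sequence of $u^\ast$-blocks.'' A nested block is just another input word, so feeding the amplifier to itself through the input is ruled out by the very inequality you derived; there is no way to raise the ``effective $\phi$'' above $\phi^\ast$ without changing the automaton. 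Whenever $\phi^\ast<1$, Case~2 of the theorem therefore fails for your $B$.

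The missing idea in the Gimbert--Oualhadj construction (in Fijalkow's presentation, which the paper follows) is that the two extra letters play \emph{different} roles, giving a two-level structure: \texttt{end} delimits individual simulations of $A$, but a branch does \emph{not} commit to $\top$ or $\bot$ after a single simulation. Instead, the optimist copy must survive a \emph{streak} of $n$ consecutive accepting simulations, and the pessimist copy a streak of $n$ consecutive rejecting ones, before the letter \texttt{check} commits the survivors to $\top$ and $\bot$ respectively; a broken streak returns to the neutral state and waits for the next \texttt{check}-delimited block. On inputs of the form $\bigl((u\,\mathtt{end})^{n}\,\mathtt{check}\bigr)^{m}$ the odds of absorption in $\top$ versus $\bot$ become (up to the initial split) $\phi(u)^{n}:(1-\phi(u))^{n}$, so the acceptance probability tends to $\phi^{n}/(\phi^{n}+(1-\phi)^{n})$, which is driven to $1$ by letting $n\to\infty$ \emph{as part of the input} --- this is the amplification your single-round commitment cannot achieve. (For general $\lambda_A$ one races the accepting streak against a $\lambda_A$-biased comparison rather than against $1-\phi$, and $\lambda_B$ is then met by adjusting the initial split; your choice of $\alpha$ for Case~1 is in the right spirit.) Your state count $2d+3$ and the parallel optimist/pessimist copies are the correct skeleton, but without the streak-versus-streak race mediated by \texttt{check}, the theorem's Case~2 is unreachable.
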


\medskip

This was proved
for $\lambda_A=\lambda_B=1/2$ by
Gimbert and Oualhadj~\cite{gimbert-oualhadj-2010:PFA}
in 2010
in order to show that
it is undecidable
whether the conclusion of Case~2 
for a PFA $B$ holds (the ``Value~1 Problem'').
The construction and proof was simplified by
Fijalkow~\cite{Fijalkow-2017-SIGLOG}.
The generalization to arbitrary
$\lambda_A$ and $\lambda_B$ is not difficult,
and in addition,
we have included the precise statement about the number $2d+3$ of states.
For completeness, we present a proof in
\autoref{sec:amplification-a}.
It essentially follows
Fijalkow's construction, and
it eliminates an oversight
in
\cite{gimbert-oualhadj-2010:PFA}
and~\cite{Fijalkow-2017-SIGLOG}.

With this technique, one can achieve an arbitrarily large gap
with a moderate increase in states, roughly by a factor of 2.
In particular, from
\autoref{thm-matrices-backward}\ref{backward}, we get a PFA $B$
with 6 matrices of size $15\times 15$, which exhibits the strong
dichotomy expressed in
\autoref{thm-amplify},
for any $\lambda_B>0$.

This construction does not preserve the property
of being a PFA with fixed transition matrices. In  the
PFA $B$ constructed
in \autoref{sec:amplification-a}
(see \autoref{tab:expanding}), the transitions depend both
on the starting distribution $\pi$ and on the final vector $\eeta$ of~$A$.

\subsection{Uniqueness}

In \autoref{thm-matrices-forward}\ref{4-positive} and
\autoref{thm-matrices-backward}\ref{backward},
the constructed PFA has the property that the
recognized language contains at most one word.
As with the large gap guarantee in \autoref
{sec:amplification}, this leads to a stronger statement where
 the problem gets the nature of a \emph{promise problem}.

 Neary \cite{neary:PCP5:2015} derived his undecidable PCP
 instances from \emph{binary tag systems}.
 A binary tag system performs a deterministic
 computation.
 It follows from the
 correctness argument of the simulation that the PCP solution is
 unique if it exists, apart from the obvious possibility of repeating
 the solution arbitrarily.
In \autoref{sec:select-PCP},
 we have excluded the last possibility by
 fixing the end pair $(u_5^R,w_5^R)$ to be the first pair and
 removing it from the list. Thus, uniqueness holds for the PFA
 in
Theorems \ref{thm-matrices-forward}\ref{4-positive} and~\ref{thm-matrices-backward}\ref{backward}.

However,
 in the conversion to
 a binary input alphabet
 for Theorem~\ref{thm-matrices-forward}\ref{2-matrices}
 (\autoref{sec:reduce-to-2}),
 uniqueness is lost:
 We have seen that we may always add
 \texttt{a} or \texttt{aa} 
 to a solution.
 We don't see an easy way to eliminate these extra solutions
 without increasing the number of states.


For 
Theorems~\ref{thm-matrices-forward}\ref{5-fixed} (\autoref{sec:Mati})
and~\ref{thm-matrices-backward}\ref{backward-fixed}
(\autoref{sec:back-fixed}),  we used a PCP that models
semi-Thue systems.
However, even some
3~rule semi-Thue systems, like the one
of~\cite{matiyasevich2005decision}, would have 
at most one derivation (which we have not checked),
uniqueness would not survive the reduction to the PCP,
see the remark in \autoref{sec:semi-Thue}, \refToStepA1, after \autoref
{prop:PCP-properties}.
Thus we cannot claim uniqueness in these cases.

\subsection{Simple probabilistic automata}
In a \emph{simple} PFA, all probabilities are $0$, $\frac12$, or~1
  \cite[Definition~2]{gimbert-oualhadj-2010:PFA}.
We have constructed our PFA with decimal fractions, but it would
not be hard to switch to binary fractions.
Before 
\hyperref[sec:step4]{Step~4}, the number of states should be increased
to a power of two, so that the entries of $J$ become binary
fractions as well.
Once all transition probabilities are binary fractions, the PFA can be
converted to a simple PFA by
padding each input symbol with sufficiently many dummy symbols so that
the PFA has time to make its random decisions with a sequence of
unbiased coin flips.
\ifanonymous\else
see \cite[Section~4.4, proof of Theorem~1, item~(b)]{r-pfaeu-24}
\fi
Thus the results can be achieved with simple PFAs, but with
a larger 
 number of matrices and
a larger 
 number of states.
The precise quantities would depend on the lengths of the words $v_i$
and $w_i$ in the PCP.

\phantomsection
\addcontentsline {toc}{section}{\numberline {--}References}

\bibliography{PFA-undecidable}

\appendix

\section{Proof of \autoref{thm-amplify} (gap amplification)}
\label{sec:amplification-a}


\GapAmplification*

\begin{proof}
We first describe $B$ and explain its operation under the
assumptions that $\lambda_A=
1/2$ and $f$~is a 0-1-vector;
in other words, $A$ has a set $F$ of accepting states and a
complementary set $R$ of rejecting states.

The input for the PFA $B$
is structured into \emph{rounds}. Each round is
of the form
\begin{equation}\label{eq:round}
\ldots    \mathtt{check}\
u_1\ \mathtt{end}\ u_2\ \mathtt{end}\
\ldots\ \mathtt{end}\ u_n\  \mathtt{end}\ u_{n+1}\ 
  \mathtt{check} \ldots
\end{equation}
with $u_i\in \Sigma ^*$.
The \texttt{check} symbols delimit each round
and
separate it from the adjacent rounds.
The \texttt{end} symbols
separate successive strings $u_i$, which are taken as inputs for
(a simulation of)~$A$.
The last string $u_{n+1}$ is ignored: It has no effect on the outcome.

\begin{figure}[th]
  \centering
  \includegraphics[scale=0.96]{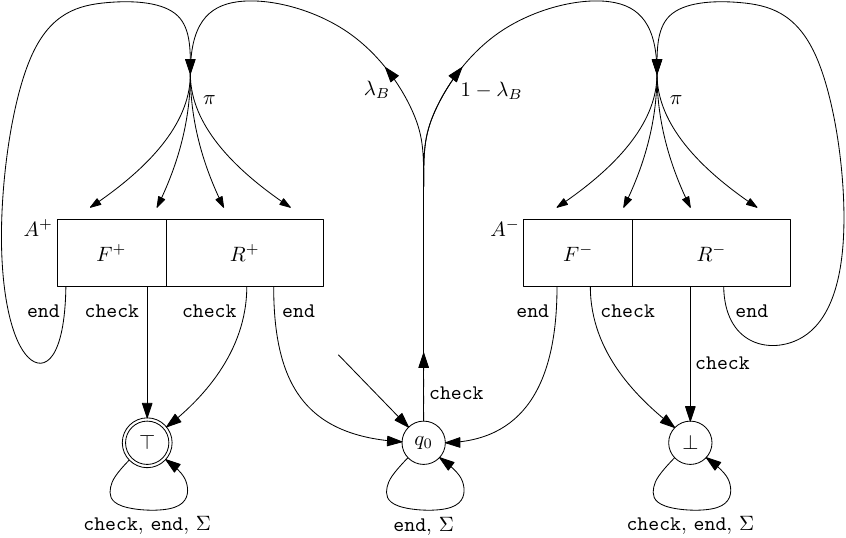}
  \caption{The expanding PFA $B$}
  \label{fig:amplify-PFA}
\end{figure}

The PFA $B$ contains two copies $A^+$ and $A^-$ of the machine $A$.
 There are three  additional states: a start state $q_0$, and two
 absorbing states $\top$ and $\bot$.
The illustration in Figure~\ref{fig:amplify-PFA}
 shows the two copies of $A$ split into
accepting states $F$ and rejecting states $R$.
Each bundle of arrows leading into $A^+$ and $A^-$ indicates a transition
according to the input distribution $\pi$ of $A$. 

The behavior of the machine can be described as follows:
At the beginning of each round, the machine is
either in one of the two absorbing states or
in state $q_0$.
If it is
in state $q_0$, it flips a biased coin, and with probabilities $\lambda_B$ and $1-\lambda_B$, it does one of the
following two things:

\begin{enumerate}
\item [($+$)]
  If
  \emph{all} strings 
  $u_1,\ldots,u_n$ of this round were accepted by $A$,
  the machine $B$ {accepts} the input, moving to the absorbing
  state $\top$.
  Otherwise, it proceeds to the next round.
  \smallskip
\item [($-$)]
  If
  \emph{all} strings 
  $u_1,\ldots,u_n$ of this round were rejected by $A$,
the machine $B$ {rejects} the input,
moving to the absorbing 
state $\bot$.
  Otherwise, it proceeds to the next round.
\end{enumerate}

We call the three possible outcomes of a round
ACCEPT, REJECT, and INDECISION.
If the PFA never takes a decision before the input is exhausted,
it will end in some nonabsorbing state, and hence it will
also reject the input.


It is easy to write down the probabilities of the outcomes of a
round~\eqref{eq:round}:
\begin{align}
  \label{eq:plus}
  \Pr[\text{ACCEPT}]
 & = \lambda_B
 \phi(u_1)\phi(u_2) \cdots \phi(u_n) \cdot 
 \\
  \Pr[\text{REJECT}]
 & =(1-\lambda_B)
 \label{eq:minus}
 \bigl(1-\phi(u_1)\bigr)\bigl(1-\phi(u_2)\bigr) \cdots \bigl(1-\phi(u_n)\bigr) 
\end{align}

Case 1: If
$\phi(u_i)\le1/2$ for 
every $u_i$, then $
\phi(u_i)\le 1-
\phi(u_i)$, and
it follows that
\begin{displaymath}
      \frac{  \Pr[\text{ACCEPT}] }
      { \Pr[\text{REJECT}]}
      \le
\frac{\lambda_B}{1-\lambda_B},
\end{displaymath}
and hence
\begin{displaymath}
      \frac{  \Pr[\text{ACCEPT}] }
      {  \Pr[\text{decide}]}
=      \frac{  \Pr[\text{ACCEPT}] }
      {  \Pr[\text{ACCEPT}] +  \Pr[\text{REJECT}]}
      \le \lambda_B,
\end{displaymath}
For the purpose of the overall analysis, it helps
to imagine that the machine 
\emph{first} makes up its mind \emph{whether} a decision is made, and only
then decides \emph{which} of the branches ACCEPT or REJECT is taken,
as represented in Figure~\ref{fig:amplify}.

\begin{figure}[htb]
  \centering
\begin{tikzpicture}[
roundnode/.style={circle, draw=black, minimum size=7mm},
squarednode/.style={rectangle, draw=red!60, fill=red!5, very thick, minimum size=5mm},
boxnode/.style={rectangle, draw=black, minimum size=5mm},
]

\node[boxnode,anchor=west] at (1,2.4)     (undec1)             {INDECISION};
\node[boxnode,anchor=west] at (5.5,4.9)     (eq1)             {ACCEPT};
\node[boxnode,anchor=west] at (5.5,3.5)     (diff1)             {REJECT};
\node[boxnode,anchor=west] at (1,4.2)        (dec1)       {decide};
\node[roundnode,label={left:Case 1: $\phi(u_i) \le \frac12\quad $}
] at (-1.2,3.3)        (start1)     {}  ;

\draw[->] (start1) -- (dec1.west) node[above,midway,sloped]{(rarely)};
\draw[->] (start1) -- (undec1.west) node[below,midway,sloped]{(often)};
\draw[->] (dec1.east) -- (eq1.west) node[midway,above,yshift=0.5mm]
{
  $\le \lambda_B$
};
\draw[->] (dec1.east) -- (diff1.west) node[midway,below,yshift=-1mm]
{
  $\ge 1-\lambda_B$
};

\node[boxnode,anchor=west] at (1,-1.1)     (undec2)             {INDECISION};
\node[boxnode,anchor=west] at (5.5,1.4)     (eq2)             {ACCEPT};
\node[boxnode,anchor=west] at (5.5,0)     (diff2)             {REJECT};
\node[boxnode,anchor=west] at (1,0.7)        (dec2)       {decide};
\node[roundnode,label={left:Case 2: $\phi(u) > \frac12\quad $}
] at (-1.2,-0.2)        (start2)     {}  ;

\draw[->] (start2) -- (dec2.west) node[above,midway,sloped]{(rarely)};
\draw[->] (start2) -- (undec2.west) node[below,midway,sloped]{(often)};
\draw[->] (dec2.east) -- (eq2.west) node[midway,above]{$\ge1-\eps$\ \quad \null};
\draw[->] (dec2.east) -- (diff2.west) node[midway,below,yshift=-0.5mm]{$\le\eps$};


\end{tikzpicture}

\caption{The outcome of processing one round by the PFA $B$.
Case 2 assumes a large number $n$ of repetitions of~$u$.}
  \label{fig:amplify}
\end{figure}
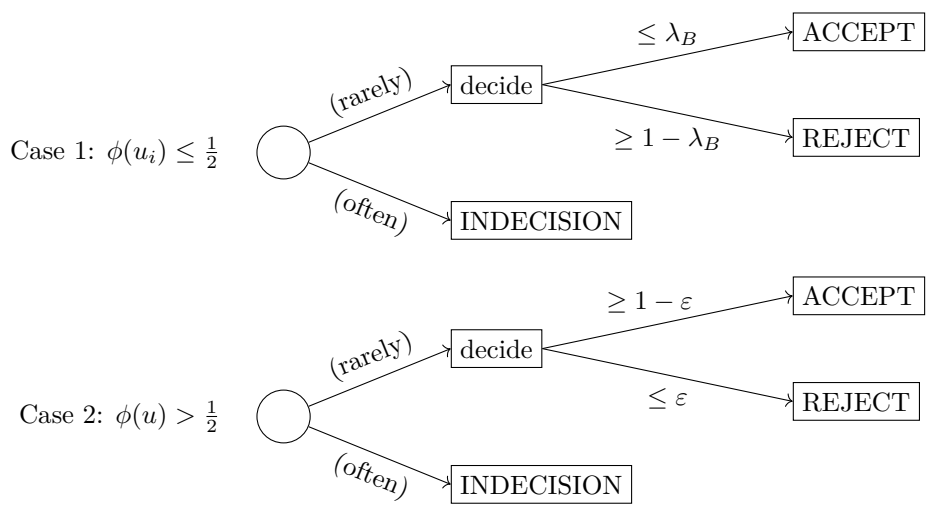

Then, for an input that consists of any number of rounds, whenever the
decision is taken, $B$ favors REJECT over 
ACCEPT with a ratio at least $(1-\lambda_B) : \lambda_B$.
To this, we must add the probability of rejection
due to continuing INDECISION, which is also in favor of rejection.
This proves statement~1 of the theorem.

Case 2:
We take a word $u$ with $\phi(u)>1/2$ and repeat it a large number
of times. In this way,
for a round of the form $\ldots\mathtt{check}\
(u\ \mathtt{end})^n \mathtt{check}\ldots$,
we can make the ratio
between ACCEPT \eqref{eq:plus}
and REJECT \eqref{eq:minus}
as large as we like:
\begin{equation}
\label  {eq:eps}
      \frac{  \Pr[\text{ACCEPT}] }
      {  \Pr[\text{decide}]}
=      \frac{  \Pr[\text{ACCEPT}] }
      {  \Pr[\text{ACCEPT}] +  \Pr[\text{REJECT}]}
      \ge 1-\eps,
    \end{equation}
    for any $\eps>0$.
    The probability of INDECISION will also approach~1, but this can
    be compensated by a large number of rounds:
We create an input of the following form:
\begin{displaymath}
\bigl(    \mathtt{check}\
(u\ \mathtt{end})^n \bigr)^m
\end{displaymath}
We first choose $n$ large enough to satisfy~\eqref{eq:eps} in each round.
Then, no matter how tiny the probability of a decision in a single round is,
the probability of continued INDECISION can be made arbitrarily small
by increasing $m$.
The input is rejected if either
(a) no decision is taken or (b) the PFA takes a decision and it
decides to REJECT.
Both probabilities can be made arbitrarily small, and
this proves statement~2 of the theorem.

\begin{table}[bthp]
  \centering
  \begin{tabular}{|l|l|l|l|}\hline
    state of $B$&input symbol&transition to&with probability\\
    \hline
    $q_0$ &\texttt{check} & $q_j^+$ & $\lambda_B\pi_j$\\
    & & $q_j^-$ & $(1-\lambda_B)\pi_j$\\
    \cline{2-4}
                &\texttt{end}, $\Sigma$ & $q_0$ & 1\\
    \hline
    $q_i^+$ &\texttt{end} & $q_j^+$ & $f_i\pi_j$\\
    &&$q_0$ & $1-f_i$\\
    \cline{2-4}
                &\texttt{check} & $\top$ & 1\\
    \cline{2-4}
 &$\Sigma$ &  $q_j^+$ & as from $q_i$ to $q_j$
                                   in $A$\\
    \hline
    $q_i^-$ &\texttt{end} &$q_0$ & $f_i$\\
    && $q_j^-$ & $(1-f_i)\pi_j$\\
    \cline{2-4}
                &\texttt{check} &$\bot$ & $1$\\
    \cline{2-4}
                &$\Sigma$ &  $q_j^-$ & as from $q_i$ to $q_j$ in $A$\\
    \hline
  \end{tabular}
  \caption{Transitions in $B$ for all states except the absorbing
    states $\top$ and $\bot$} 
  \label{tab:expanding}
\end{table}

We still have to discuss the cases that
 $f$ is not a 0-1-vector or
 $\lambda_A\ne
 1/2$.
 Consider a state~$i$ with $0< f_i<1$.
 If $B$ is in such a state and the \texttt{end} symbol is read, then
with probability $f_i$, it behaves like an accepting state,
and with probability $1-f_i$, it behaves like a rejecting state.
This is reflected the transition table of \autoref{tab:expanding}.

Finally, if
 $\lambda_A\ne1/2$, we have seen in
 \autoref{sec:manipulate} how we can modify $f$ to get an equivalent
 PFA with $\lambda_A=1/2$.
Thus, the theorem is proved in full generality.
\end{proof}




\section[From
  a semi-Thue system with \texorpdfstring{$k$}{k} rules to a PCP
  with \texorpdfstring{$k+4$}{k+4} nonempty pairs
]{
  From 
  a semi-Thue system with \texorpdfstring{$k$}{k} rules to a PCP
  with \texorpdfstring{$k+4$}{k+4} pairs of nonempty words}
\label{sec:semi-Thue}

\newcommand\leftbound{
  [}  
\newcommand\rightbound{
 ]}
\newcommand\middleseparator{*} 

The construction is due
to Claus \cite{\clausEATCS}
from 1980,
and it was refined by
Halava, Harju, and Hirvensalo in 2007
  \cite[Theorem~6]{halava07-Claus-instances},
 with a condensed and slightly different proof.
We give a variation of the proof, extending it to ensure that none of the
words in the pairs is {empty}.
 
The following version of the PCP  is suitable for our needs
\cite[Section 3.2]{Harju1997}: 
%
\begin{quote}
\emph{The Generalized Post Correspondence Problem (GPCP)}.\\
Given a sequence of pairs of words $(v_1,w_1), (v_2,w_2), \ldots
 (v_k,w_k)$,
is  there a 
 sequence
 $i_2,\ldots, i_{m-1}$
of indices $i_j \in \{3,\ldots,k\}$ such that
\begin{displaymath}
  v_1v_{i_2}v_{i_{3}}\ldots v_{i_{m-1}}v_2
=   w_1w_{i_2}w_{i_{3}}\ldots w_{i_{m-1}}w_2 \ ?
\end{displaymath}
\end{quote} 
We have already encountered
such special roles
of the start pair $(v_1,w_1)$ and the end pair $(v_2,w_2)$,
both in connection with
the 5-pair PCPs of Neary
and the 7-pair PCPs of
Matiyasevich and S{\'e}nizergues.
However, there it was
 a property of the PCP \emph{instances}, of which this restricted
 form of the solutions was a \emph{consequence}.
In case of the GPCP, this is a \emph{requirement} on the solution; it is part of the
``rules of the game.''



There is also the
\emph{Modified Post Correspondence Problem} (MPCP), where only the start pair $(v_1,w_1)$ is
prescribed. It
is often used as an
intermediate problem when reducing the Halting Problem for Turing
machines to the PCP.
This
MPCP (or the GPCP) can be reduced to the classic PCP with some extra effort, by
modifying the word pairs,
see for example
\cite[Lemma 8.5
]{hopcroft79} or
\cite[p.~189]{Sipser}.
We describe this procedure, which follows the standard pattern,
in \refToStepA4 of the proof below.

In our situation,
the GPCP is actually the more convenient version of the
problem.
Although we don't need the PCP version of the theorem, 
we include it for completeness, and
also since the proof in \cite[Theorem~6]{halava07-Claus-instances}
contains a slight error, which makes is 
invalid
in the very context of empty words.







The theorem starts from a \emph{semi-Thue system}. 
A {semi-Thue system} 
is specified by
a set of \emph{replacement rules} of the form
  $l\longrightarrow r$, where $l$ and $r$ are words over some
  alphabet. Such a rule means that a word $u$ can be transformed
  by replacing some occurrence of 
  $l$ in $u$ by $r$.
This is called a one-step derivation, and it is written as $u\to v$,
meaning that $u=xly$ and $v=xry$ for some words $x,y$.
As usual,  $\xrightarrow*$ denotes the transitive and reflexive closure
of this relation.

\begin{theorem}
  \label{lem:claus-instance}
  Suppose there is a semi-Thue system 
  with $k$ rules
  $l_i\longrightarrow r_i$ over some alphabet~$\Sigma$, and a word $u_\infty\in\Sigma^*$,
  for which the following problem 
  \textup(the \emph{individual word problem} or
\emph{individual accessibility problem} with fixed target $u_\infty$\textup)
  is undecidable:
  \begin{quote}
    Given a word $u_0\in \Sigma^*$, is
    $u_0 \xrightarrow
    {*} u_\infty$?    
  \end{quote}
  Then there is a list of $2(k+4)-1$ nonempty words
 $v_1,v_2,v_3,\ldots, v_{k+4}$ and $w_2,w_3, \ldots,w_{k+4}$
 over the binary alphabet $\{\mathtt{a},\mathtt{b}\}$ for which
 the following problem is undecidable:
\begin{quote}
  Given a nonempty word $w_1 \in\{\mathtt{a},\mathtt{b}\}^*$,
does the
  GPCP with the $k+4$ word pairs
   $(v_1,w_1),\allowbreak (v_2,w_2),\allowbreak \ldots,\allowbreak
 (v_{k+4},w_{k+4})$ have a solution?
\end{quote}
The construction can be modified so that the same question with the
\textup(ordinary\textup) PCP is undecidable.
\end{theorem}

\begin{proof}
  We transform the semi-Thue system into a PCP in several steps.
  
\paragraph{Step 1: Representing a derivation as a GPCP solution.}
\label{para:step1}

We start with a simpler procedure that creates $k+3+|\Sigma|$ word pairs.
We expand the alphabet to the larger alphabet $\Sigma' :=
\Sigma\cup
\{\texttt{\textup\#}\}$
with a new separator symbol \texttt{\#}.
We use the following
word pairs:
\begin{itemize}
\item the \emph{start pair} $(v_1,w_1)= ( \texttt{\#}
, \texttt{\#}
u_0\texttt{\#})$.
\item the \emph{end pair} $
(v_2,w_2)= 
  (  \texttt{\#}u_\infty\texttt{\#},
   \texttt{\#})$.
  \end{itemize}
In addition, we have the following $k+|\Sigma|+1$ word pairs:
  \begin{itemize}
\item for every rule $l\longrightarrow r$, the \emph{rule pair} $( l,r)$
\item for every symbol $\sigma \in \Sigma' 
  $,
the \emph{copying pair} $(\sigma,\sigma)$
\end{itemize}

The idea is that a derivation
\begin{equation}
  \label{eq:derivation}
u_0\to u_1 \to \dots \to u_{n-1} \to
u_n=u_\infty  
\end{equation}
is represented by the PCP solution
\begin{displaymath}
  v_1v_{i_2}v_{i_{3}}\ldots v_{i_{m-1}}v_2
=   w_1w_{i_2}w_{i_{3}}\ldots w_{i_{m-1}}w_2 
=\texttt{\#} 
u_0\texttt{\#} u_1 \texttt{\#} \ldots \texttt{\#} u_{n-1}
\texttt{\#}u_\infty\texttt{\#}.
\end{displaymath}
When building this solution from left to right,
the partial solution
$ v_1v_{i_2}v_{i_{3}}\ldots v_{i_{j}}$ lags one derivation step behind
the partial  solution
$w_1w_{i_2}w_{i_{3}}\ldots w_{i_{j}}$.
After completing a complete derivation step,
the situation typically looks like this:
\begin{equation} \label{eq:two-rows}
  \begin{array}{@{}l@{}l}
  v_1v_{i_2}v_{i_{3}}\ldots v_{i_{j}}\hfill\null
  &{}=
    \texttt{\#} 
u_0\texttt{\#} u_1 \texttt{\#} \ldots \texttt{\#} u_{s-1}\texttt{\#} 
\\[0.2ex]
w_1w_{i_2}w_{i_{3}}\ldots w_{i_{j}}
  &{}=
    \texttt{\#} 
u_0\texttt{\#} u_1 \texttt{\#} \ldots \texttt{\#} u_{s-1}  \texttt{\#} u_{s} \texttt{\#} 
  \end{array}
\end{equation}
We will often use the image of building a solution incrementally from
left to right in two rows as in~\eqref{eq:two-rows}. The words $v_i$ are appended one after the
other in the upper row as the corresponding words $w_i$ are simultaneously
concatenated in the lower row.

The start pair $(v_1,w_1)$ establishes the situation
\eqref{eq:two-rows}
at the beginning.
When extending this solution, the next word pair
$(v_{i_{j+1}},w_{i_{j+1}})$ and the following ones must match
their $v$-components with the string $u_s
$ in the lower row, and they can do so either by
simply using copying pairs, or by using the left side $l$ of a
rule pair $(l,r)$, in which case $r$ is substituted for $l$ in
the growing word  $u_{s+1}$ in the lower row.
The solution can only be completed with the 
end pair $(  \texttt{\#}u_\infty\texttt{\#},
   \texttt{\#})$, i.e., if $u_s=u_\infty$.

 The following statements can be easily established by induction:  
 \begin{claim}
   \label{prop:PCP-properties}
     \begin{enumerate}
     \item At any time after the start 
       pair
 $(v_1,w_1)$ is included and before
       the last pair $(v_2,w_2)$ is used,
       the partial solution
       $w_1w_{i_2}w_{i_{3}}\ldots w_{i_{j}}$
       in the lower row
contains one more symbol \texttt{\textup\#} 
than
$v_1v_{i_2}v_{i_{3}}\ldots v_{i_{j}}$
in the upper row.
\item \label{shorter}
In particular,
$v_1v_{i_2}v_{i_{3}}\ldots v_{i_{j}}$
is a prefix of
$w_1w_{i_2}w_{i_{3}}\ldots w_{i_{j}}$
and strictly shorter than $w_1w_{i_2}w_{i_{3}}\ldots w_{i_{j}}$.
\item
  Two consecutive words $u$ and $u'$ that are delimited by
  \texttt{\textup\#} symbols are related in the following way:
  Any number of disjoint occurrences of the left hand side
  $l$ of some rule $l\longrightarrow r$ in $u$ are replaced by the
  corresponding right-hand side $r$.
\claimqedhere    
     \end{enumerate}
   \end{claim}
   \smallskip
It is thus possible to carry out several disjoint replacements $l\longrightarrow
r$, possibly with different rules,
when creating
$u'$ 
from~$u
$, but these would be permitted in the semi-Thue system as a
sequence of consecutive replacements.
It follows that the GPCP has a solution if and only if
   $u_0 \xrightarrow
   {*} u_\infty$. 

We see from this discussion that the successive
   words $u_0,u_1,\ldots$ in the solution word \eqref{eq:two-rows}
   are not necessarily the same as the successive words
   $u_0,u_1,\ldots$ in the derivation~\eqref{eq:derivation}:
   They form a subsequence of~\eqref{eq:derivation}, possibly with repetitions.
We note in particular that uniqueness is lost in the transformation.
Even if the semi-Thue system 
has a unique derivation~\eqref{eq:derivation},
the GPCP will
not have unique solutions. The reason is that the GPCP may decide,
at any time, to
use only copying pairs in one round, effectively setting $u_{s+1}=u_s$ in~\eqref{eq:two-rows}.
 \paragraph{Step 2. Encoding the strings in binary.}
\label{para:step2}

 We choose a binary encoding
 $\Sigma' 
 \to \{\mathtt{a},\mathtt{b}\}$ 
  of the alphabet,
 using the codewords
 $\mathtt{bab},\mathtt{baab},\mathtt{baaab},\mathtt{baaaab},\ldots$.
 We denote the encoded version 
 of a string~$w$ 
 by $\langle w\rangle \in \{\mathtt{a},\mathtt{b}\}^*$.

 The advantage of the binary alphabet is that we need just two copying pairs.
All other word pairs are simply encoded symbol by symbol.
Thus, we have the following $k+4$ word pairs:
\begin{itemize}
\item the \emph{start pair} $(v_1,w_1)= ( \langle\texttt{\#}
\rangle, \langle\texttt{\#}  u_0\texttt{\#}\rangle)$
\item the \emph{end pair} $
(v_2,w_2)= 
  ( \langle \texttt{\#}u_\infty\texttt{\#}\rangle,
   \langle\texttt{\#}\rangle)$  
  \item
    $k$    \emph{rule pairs} $(\langle l\rangle,\langle r\rangle)$, one
for each rule $l\longrightarrow r$
\item
 two \emph{copying pairs} $(\mathtt{a},\mathtt{a})$
and $(\mathtt{b},\mathtt{b})$.
\end{itemize}

To ensure that this transformation is correct, one needs
to use the fact that the family of codewords
$\mathtt{bab},\mathtt{baab},\mathtt{baaab},\ldots$
has some strong uniqueness properties.
Not only is the code injective, but
 no codeword $\langle \alpha \rangle$ is a substring of
another codeword or even of a composition of codewords, unless
it appears there because it was used to encode
the symbol $\alpha$. Here is a more formal statement of this property:

\begin{proposition}
  If $\langle u \rangle = x\langle \alpha \rangle y$ for
  some symbol $\alpha\in
  \Sigma' 
  $
  and $x,y\in \{\mathtt{a},\mathtt{b}\}^*$,
  then there are strings
$v,w \in (
\Sigma' 
)^*$ such that
   $\langle v\rangle = x$ and
   $\langle w\rangle = y$ \textup(and consequently,
    $u=v\alpha w$\textup).
\qed
\end{proposition}
I don't know if this property of an encoding has a name.%
\footnote
{Claus 
\cite[p.~57, transformation from system $U_1$ to $U_2$]{\clausEATCS}
applies a code 
  with three codewords
  $h(x_1)=
  \texttt{01}$,
 $h(x_2)=
 \texttt{011}$,
 $h(x_3)=
 \texttt{0111}$ at this place of the argument
 (assuming that the initial semi-Thue 
 system uses a binary alphabet,
 and by adding \texttt{\#}, the alphabet has grown to three letters $x_1,x_2,x_3$).
 This code is injective, but it is inadequate and can lead to errors.
For example, the first three bits of
 \texttt{0111} can be mistaken as the codeword
\texttt{011} for $x_2$, and a rule for a word ending in $x_2$
can be applied. The remaing bit
\texttt{1} is simply copied by the copying pair.
The rule might generate a word ending in $x_1$, and
  $h(x_1)=
  \texttt{01}$
  together with the remaining~\texttt{1} forms
the codeword  \texttt{011}, which stands for~$x_2$. So the derivation
could proceed and the mistake
would go undetected.
%
}

\paragraph{Step 3: Making the rule pairs nonempty.}
\label{para:step3}
When one of the sides $l$ or $r$ of a rule
$l\longrightarrow r$ is empty, the corresponding word in the word pair
 $(\langle l\rangle,\langle r\rangle)$
is also
 empty.
 To avoid empty words, we change the rule pairs to
 $(\langle l\rangle\mathtt{b},\langle r\rangle\mathtt{b})$
for each rule $l\longrightarrow r$.


 The extra letter \texttt{b} makes no difference, because
all codewords start with~\texttt{b}.
Thus, whenever we want to match
$\langle l\rangle
$ in the upper row, we are sure
that the next symbol is~$\mathtt{b}$, so
we might as well copy the $\mathtt{b}$ from the upper row and insert
it after
$\langle r\rangle
$ on the lower row.
We might lose an opportunity to carry out several rule replacements
simultaneously, but this does not change the solvability of the GPCP.
The only problematic case would be that
$\langle l\rangle$ or $\langle r\rangle$ occurs at the end of the
solution string,
but this is impossible since the rules of the GPCP require the string to end with the last words
$v_2=\langle \texttt{\#}u_\infty\texttt{\#}\rangle$ and $w_2=  \langle\texttt{\#}\rangle$.


At this point, we have proved the theorem as far as the GPCP is
concerned.
We only need to observe that $w_1$ is the only word that depends on
$u_0$, the input to the individual word problem for the semi-Thue system. All other
words are fixed.



\paragraph{Optional Step 4: Forcing the start and end pairs to their
  places.}
\label{para:step4}

%
%
For converting the GPCP to an
ordinary PCP,
we use the following construction of \cite[Theorem 3.2]{Harju1997}.
%
We add three new symbols \texttt{\middleseparator },
 \texttt{\leftbound},
and \texttt{\rightbound}
to the alphabet and define two morphisms $\sigma,\rho\colon
\{\texttt{a},\texttt{b}\}^*
\to \{\texttt{a},\texttt{b}
,\texttt{\middleseparator }\}^*
$.
The morphism
$\sigma(u)$ adds a~\texttt{\middleseparator } on the left of every symbol of~$u$,
and $\rho(u)$ adds a \texttt{\middleseparator } on the right of every symbol of~$u$.
For example,
$\sigma(\texttt{aab})=\texttt{\middleseparator a\middleseparator a\middleseparator b}$
and $\rho(\texttt{aab})=\texttt{a\middleseparator a\middleseparator b\middleseparator }$.



We define the modified word pairs $(\bar v_i,\bar w_i)$ as follows:
\begin{align*}
  (\bar v_1,\bar w_1)
  &=(\texttt{\leftbound}\sigma(v_1),\texttt{\leftbound\middleseparator }\rho(w_1))
 =(\texttt{\leftbound}\sigma(v_1),\texttt{\leftbound}\sigma(w_1)\texttt{\middleseparator })
\\  (\bar v_2,\bar w_2)
  &=(\sigma(v_2)\texttt{\middleseparator }\texttt{\rightbound},
 \rho(w_2)\texttt{\rightbound})
=(\texttt{\middleseparator }\rho(v_2)\texttt{\rightbound},
\rho(w_2)\texttt{\rightbound})
\\  (\bar v_i,\bar w_i)
  &=(\sigma(v_i),\rho(w_i)) \quad (i\ge 3)
\end{align*}
 The pattern of these words is shown in
\autoref{tab:word-patterns}.

\begin{table}[h]
{  \centering
  \begin{tabular}{|l|l|c|r|}
    \hline
    &$i=1$& $3\le i\le k+4$& $i=2$\\
    \hline
    $\bar v_i$ & $\texttt{\leftbound\middleseparator }x\texttt{\middleseparator }x\texttt{\middleseparator }x\texttt{\middleseparator }x
                 $ 
& $\texttt{\middleseparator }x\texttt{\middleseparator }x\texttt{\middleseparator }x
                 $ 
& $\texttt{\middleseparator }x\texttt{\middleseparator }x\texttt{\middleseparator }\texttt{\rightbound}
                 $ 
                 \\
    $\bar w_i$ & $\texttt{\leftbound\middleseparator }x\texttt{\middleseparator }x\texttt{\middleseparator }
                 $ 
& $x\texttt{\middleseparator }x\texttt{\middleseparator }x\texttt{\middleseparator }x\texttt{\middleseparator }x\texttt{\middleseparator }
                     $ 
                     &$x\texttt{\middleseparator }\texttt{\rightbound}$
                 \\\hline
  \end{tabular}

  }
  \caption{Patterns of
    some representative 
    words $\bar v_i$ and $\bar w_i$.
The symbol $x$ stands for \texttt{a} or \texttt{b}.   
    The lengths of these words can of course be different from what is
    shown:
    The number of $x$'s is the length of the original words
    $v_i$ or~$w_i$.
}
  \label{tab:word-patterns}
\end{table}


It is clear that every solution to the GPCP gives rise
to a solution of the PCP:
For every initial part
 $1,i_2,\ldots, i_{j}$ of a PCP solution that does not include the
 final word pair $i_m=2$, the following relation can be easily
 proved by induction.
 \begin{align}
   \label{eq:relation-upper-row}
             \bar v_1\bar v_{i_2}\bar v_{i_{3}}\ldots \bar v_{i_{j}}
  &=\texttt{\leftbound}\,
    \sigma (  v_1 v_{i_2} v_{i_{3}}\ldots  v_{i_{j}})
           \\
           \bar w_1\bar w_{i_2}\bar w_{i_{3}}\ldots \bar w_{i_{j}}
  &{}= \texttt{\leftbound}\,
    \sigma( w_1 w_{i_2} w_{i_{3}}\ldots  w_{i_{j}})
    \,\texttt{\textup\middleseparator }
\label{eq:relation-lower-row}
 \end{align}
 When the solution includes the  final pair $i_m=2$,
 the extra \texttt{\middleseparator } 
 appears also at the end of
 \eqref
 {eq:relation-upper-row}, and
the closing bracket sign \texttt{\rightbound} is added to both strings,
so we have equality between
 \eqref
{eq:relation-upper-row} and~\eqref
{eq:relation-lower-row}.

For the reverse direction,
we have to argue that a solution to the PCP for the word pairs
 $(\bar v_i,\bar w_i)$
gives rise to the a solution of the GPCP.
Since in every modified pair  $(\bar v_i,\bar w_i)$ except
$(\bar v_1,\bar w_1)$, the two words
$\bar v_i$ and $\bar w_j$ start with different letters,
the start pair $(\bar v_1,\bar w_1)$ is the only pair that can
be used at the start of the PCP.
(This argument relies on the fact that all words are
nonempty.)

Moreover, the opening bracket \texttt{\leftbound} 
occurs only in 
$\bar v_1$
and $\bar w_1$.
Therefore, if the pair
$(\bar v_1,\bar w_1)$ is used a second time,
these occurrences of $\bar v_1$
and
 $\bar w_1$
must be matched
in a left-aligned position.
 We can then cut the solution
 $i_1i_2\ldots i_m$
 at this point,
 and both pieces will be
 solutions of the PCP.\footnote{
 In the construction of
\cite[Lemma~1]{halava07-Claus-instances},
the initial brackets \texttt{\leftbound} 
in
$\bar v_1$
and $\bar w_1$ are omitted
and the final brackets \texttt{\rightbound} 
in 
$\bar v_2$
and $\bar w_2$
are replaced by \texttt{\middleseparator } symbols,
which are denoted by $d$ in
\cite{halava07-Claus-instances}.
 The statement is then still true, but only
 for nonempty words $v_i$ and $w_i$,
see \autoref{sec:counter-halava}.
 This version economizes in alphabet size, but correctness is 
 more delicate.
}
Thus, by cutting the solution before the first repeated use of
the pair $(\bar v_1,\bar w_1)$ (if any),
we can assume that we only consider a solution
where 
 $(\bar v_1,\bar w_1)$ is used at the beginning and nowhere else.

 Looking at the last symbol of the solution,
 the symmetric argument
 establishes that $(\bar v_2,\bar w_2)$ is
used at the end and nowhere else.



 We have seen that every solution of
 the PCP with the word pairs
 $(\bar v_i,\bar w_i)$, possibly after cutting it before the second
 use
 of the pair $(\bar v_1,\bar w_1)$, must satisfy the constraints of
 the GPCP,
 and by
\thetag{\ref
{eq:relation-upper-row}--\ref
{eq:relation-lower-row}},
this gives a solution for the original GPCP.
(This statement holds also if the pairs may contain empty words, with
the obvious exception of the pair $(\epsilon,
\epsilon)$, but this requires additional arguments.)

   \paragraph{Step 5: Reduction to two characters.}
\label{para:step5}

   Step~4 has added three characters $\texttt{\leftbound},
 \texttt{\rightbound},\texttt{\middleseparator }$
   to the alphabet.
   We can easily reduce $\{\texttt{a},\texttt{b}
   , \texttt{\leftbound},\texttt{\rightbound},\texttt{\middleseparator }\}$
   to a binary alphabet, for example
   with the fixed-length codewords
\texttt{aaa},
\texttt{bbb},
\texttt{bba},
\texttt{aba}, and 
\texttt{bab}. 
This concludes the proof of
\autoref{lem:claus-instance}.
\end{proof}


\subsection{Counterexample to
    Lemma 1 in
    Halava, Harju, and Hirvensalo~%
    \texorpdfstring{\cite
      {halava07-Claus-instances}}{  (2007)}}
\label{sec:counter-halava}

Our proof approach differs slightly from the proof of Claus
\cite{\clausEATCS}:
There, the
transformation to the PCP (\hyperref[para:step1]{Step~1})
is merged
with the enforcement of the
GPCP rules
 (\refToStepA4)
 into one transformation.
The binary coding
 (\refToStepA2)
 is done \emph{before} the transformation to the PCP
 (\refToStepA1),
already at the level of
the semi-Thue system.

  Halava, Harju, and Hirvensalo 
\cite[Theorem~4]
{halava07-Claus-instances}
gave an independent presentation of Claus' construction.\footnote
{See also the
TUCS Technical Report
No.~766, Turku Centre for Computer Science, April 2006, by the same authors with the same title,
\url{https://typeset.io/pdf/undecidability-bounds-for-integer-matrices-using-claus-1nwnni898j.pdf}}
Unfortunately, their proof
mingles the different construction steps (Steps \refToStep1, \refToStep2, \refToStep4, and~\refToStep5) into a
single condensed one-shot construction,
which thereby becomes less
transparent.
For the correctness, they refer to on a lemma about PCP solutions, which, however,
is subject to exceptions when empty words appear.
(No proof of that lemma is given in \cite{halava07-Claus-instances}.
The authors write that the
lemma ``is straightforward'' and
supply an
unspecific reference to a 70-pages handbook article
\cite{Harju1997} of Harju and Karhumäki.)\footnote
{The handbook article of
 Harju and Karhumäki formulates Step 4 as a separate theorem
 \cite[Theorem~3.2]{Harju1997}.
Still, 
they prove their version of Theorem~\ref{lem:claus-instance}
\cite[Theorem 4.1]{Harju1997} by a ``refinement of the general ideas
presented in the
proof of Theorem~3.2'', instead of simply \emph{applying} that theorem.}

It is easy to construct a counterexample when one of the words is empty.
In the setting of \cite{halava07-Claus-instances},
a PCP with $n$ word pairs $(v_i,w_i)$ is specified by two
morphisms $g,h\colon
\{b_1,b_2,\ldots,b_n\}\to \Gamma^*
$ such that $(h(b_i),g(b_i))=(v_i,w_i)$.
In line with \cite{halava07-Claus-instances}, we translate
the PCP into a binary alphabet with codewords of
the form $\mathtt{ab}^{i+1}\mathtt{a}$, although
this translation is not important for our point.
We use the following code
$\varphi\colon \Gamma^* \to\{\mathtt{a},\mathtt{b}\}^*$:
$\varphi
(\texttt{\middleseparator })=\texttt{aba}=:{d}$,
$\varphi
(\texttt{u})=\texttt{abba}$,
$\varphi
(\texttt{v})=\texttt{abbba}$,
$\varphi
(\texttt{w})=\texttt{abbbba}$.

The PCP of our example has three word pairs. So
we use a 3-letter index alphabet
$\{b_1,b_2,b_3=b_n\}$, and the word pairs are:
\begin{align*}
  g(b_1)&=\varphi(\texttt{\middleseparator })={d} &  h(b_1)&=\varphi(\texttt{\middleseparator v\middleseparator w})
  &&\text{(the start pair)}\hskip 4cm\\
  g(b_2)&=\varphi(\epsilon)=\epsilon &  h(b_2)&=\varphi(\texttt{\middleseparator u})\\
  g(b_3)&=\varphi(\texttt{u\middleseparator v\middleseparator w\middleseparator \middleseparator }) &  h(b_3)&=\varphi(\texttt{\middleseparator \middleseparator })={dd}
  &&\text{(the end pair)}
\end{align*}
It can be checked that they
follow the pattern that is prescribed for
what is called a \emph{Claus instance}
in~\cite{halava07-Claus-instances}.
Lemma~1 in \cite{halava07-Claus-instances} claims in particular that, for such an
instance, all nonempty solutions must start with $b_1$. 

However, $g(b_2b_1b_3)
=h(b_2b_1b_3)=
\varphi(\texttt{\middleseparator u\middleseparator v\middleseparator w\middleseparator \middleseparator })
$.
So,
$
b_2b_1b_3$ is a solution that does not start with $b_1$.
The reason for the failure is that, since $g(b_2)$ is empty, the argument that the solution
cannot start with anything but $b_1$ is not valid.

The original proof
of Claus~\cite{\clausEATCS}
is correct in this point, since it is more
generous with additional alphabet symbols. (Our
extra symbol~\texttt{*} is called $\beta$ in the notation
of~\cite{\clausEATCS};
the role of the bracket symbols \texttt{\leftbound}
and \texttt{\rightbound} is played by $h(\gamma)=\mathtt{0111}$%
.)

To be fair, one should mention that
\cite{halava07-Claus-instances} sketches an alternative correctness
proof of the construction for their Theorem~4
(our Theorem~\ref{lem:claus-instance}), which is tailored to the PCP at hand
and which is independent of Lemma~1. That proof, however, also
glosses over 
the case of empty words.






\ifnotkurz

\medbreak
 
\fi

\clearpage

\section{Program for checking the multiplicative law (\autoref{lem:multi}) in \texttt{sagemath}}
\label{sec:check}

\nolinenumbers

\verbatiminput
{test-multiplicative-law.sage}


This script 
can be directly fed as input to \texttt{sage}\footnote
{\url{https://www.sagemath.org/}}. It is
robust
against the loss of proper indentation.

\end{document}